\newif\ifmain%
\edef\switcht@albion{%
  \relax\unexpanded\expandafter{\switcht@albion}%
}
\xpatchcmd*{\switcht@albion}{ \def}{\def}{}{}
\xpatchcmd{\switcht@albion}{\relax}{}{}{}
\edef\switcht@deutsch{%
  \relax\unexpanded\expandafter{\switcht@deutsch}%
}
\xpatchcmd*{\switcht@deutsch}{ \def}{\def}{}{}
\xpatchcmd{\switcht@deutsch}{\relax}{}{}{}
\edef\switcht@francais{%
  \relax\unexpanded\expandafter{\switcht@francais}%
}
\xpatchcmd*{\switcht@francais}{ \def}{\def}{}{}
\xpatchcmd{\switcht@francais}{\relax}{}{}{}
\g@addto@macro{\UrlBreaks}{\UrlOrds}
\newcommand{\orc}[1]{\ifmain\orcidID{#1}\else\orcidlink{#1}\fi}
\def\finex{{\unskip\nobreak\hfil
            \penalty50\hskip1em\null\nobreak\hfil$\diamond$
            \parfillskip=0pt\finalhyphendemerits=0\endgraf}}
\newcommand{\named}[2]{#1 \triangleright #2}
\newcommand{\measure}[2]{\expMeasure{#1}{#2}}
\newcommand{\measureOp}{\expFmt{meas}}
\newcommand{\para}[2]{#1\pp #2}
\newcommand{\pp}{\mid}
\newcommand{\ifff}{{\bf if\ }}
\newcommand{\then}{{\bf then\ }}
\newcommand{\els}{{\bf else\ }}
\newcommand{\tl}[1]{{\bf tl}(#1)}
\newcommand{\hd}[1]{{\bf hd}(#1)}
\newcommand{\fst}[1]{{\bf fst}(#1)}
\newcommand{\snd}[1]{{\bf snd}(#1)}
\newcommand{\gayIn}[2]{#1 ? #2}
\newcommand{\gayOutput}[1]{{\bf{output}}\,#1}
\newcommand{\trans}[3]{#1 \rightarrow #2: #3}
\newcommand{\plts}[2]{
    \ifempty{#2}{%
        \xlongrightarrow[]{#1}_{1}
    }{%
        \xlongrightarrow[]{#1}_{#2}
    }
}
\newcommand{\lts}[1]{\xlongrightarrow[]{#1}}
\newcommand{\Lts}[1]{\xRightarrow[]{#1}}
\newcommand{\sysone}{\sysM}
\newcommand{\systwo}{\sysMi}
\newcommand{\prob}{\pi}
\newcommand{\qregone}{\mathsf{Q}}
\newcommand{\qregtwo}{\mathsf{Q}'}
\newcommand{\pair}[2]{\langle #1,#2\rangle}
\newcommand{\QS}[1]{\expUnitary[QS]{#1}}
\newcommand{\qow}[1]{\expUnitary[qow]{#1}}
\newcommand{\qowOp}{\expUnitaryOp[qow]}
\newcommand{\labels}{\mathsf{L}}
\newcommand{\varSet}{\mathcal{V}}
\newcommand{\hilb}[1]{\mathcal{H}[#1]}
\newcommand{\concconf}{\pair{\qbregone}{\sysone}}
\newcommand{\basis}[1]{\mathcal{B}[#1]}
\newcommand{\qprob}{\mathbb{P}}
\newcommand{\qproj}{\hat{\mathsf{P}}}
\newcommand{\qnew}{\mathsf{N}}
\newcommand{\rint}{[0,1]}
\newcommand{\cmplx}{\mathbb{C}}
\newcommand{\qbregone}{\mathscr{Q}}
\newcommand{\qbregtwo}{\mathscr{Q}'}
\newcommand{\setsub}[2]{{#1}{\setminus}\{#2\}}
\newcommand{\setadd}[2]{{#1}{\cup}\{#2\}}
\newcommand{\sets}[1]{\{#1\}}
\begin{document}
\title{
  Towards Quantum Multiparty Session Types
  \thanks{Partially supported by the EU Marie Sk\l{}odowska-Curie action
    ReGraDe-CS (No: 101106046), by French ANR project SmartCloud
    ANR-23-CE25-0012, by INdAM – GNCS 2024 project MARVEL, code CUP
    E53C23001670001, and by the Italian Ministry of University and Research under
    PNRR - M4C2 - I1.4 Project CN00000013 ``National Centre for HPC, Big Data and
    Quantum Computing''}
}
\titlerunning{Towards QMPSTs}
\author{
  {Ivan Lanese}\inst{1}\orc{0000-0003-2527-9995}
  \and
  {Ugo Dal Lago}\inst{1}\orc{0000-0001-9200-070X}
  \and
  {Vikraman Choudhury}\inst{1}\orc{0000-0003-2030-8056}
}
\authorrunning{Lanese et al.}
\institute{OLAS Team, University of Bologna \& INRIA}
\maketitle
\begin{abstract}
  Multiparty Session Types (MPSTs) offer a structured way of specifying
communication protocols and guarantee relevant communication
properties, such as deadlock-freedom.
In this paper, we extend a minimal MPST system with quantum data and operations,
enabling the specification of quantum protocols.
Quantum MPSTs (QMPSTs) provide a formal notation to describe quantum protocols,
both at the abstract level of global types,
describing which communications can take place in the system and their dependencies,
and at the concrete level of local types and quantum processes,
describing the expected behavior of each participant in the protocol.
Type-checking relates these two levels formally,
ensuring that processes behave as prescribed by the global type.
Beyond usual communication properties,
QMPSTs also allow us to prove that qubits are owned by a single process at any time,
capturing the quantum no-cloning and no-deleting theorems.
We use our approach to verify four quantum protocols from the literature,
respectively Teleportation, Secret Sharing, Bit-Commitment, and Key Distribution.

  \keywords{
    Multiparty Session Types \and
    Linear Types \and
    Quantum Protocols \and
    Quantum Processes \and
    Quantum Computing \and
  }
\end{abstract}

\setlength{\emergencystretch}{3em}

\section{Introduction}
\label{sec:intro}
Quantum protocols involve the exchange of (quantum) information between multiple parties in quantum networks,
giving rise to complex interaction patterns, interleaved with manipulations of quantum states.
This raises the need for tools and techniques to specify, analyse, and verify such protocols.
However, while many languages, notations, and formal calculi exist to describe quantum circuits,
fewer options exist for formally describing quantum multiparty protocols
(see the discussion in the recent systematic survey~\cite{qplSurvey}).
In fact, there does not exist a mainstream formal way to describe quantum protocols. This is witnessed by the fact that
the Quantum Protocol Zoo~\cite{qpzoo},
a well-known library of quantum protocols,
relies on natural language -- hence ambiguous -- descriptions, paired with Python implementations.

Existing formalisms for quantum protocols include imperative languages such as
LanQ~\cite{LanQ} and QMCLANG~\cite{Papanikolaou09,DavidsonGMNP12},
and process calculi such as
CQP~\cite{GayN05}, $\textrm{CCS}^\textrm{q}$~\cite{gay18} and lqCCS~\cite{gadducciQBisim}.
The former do not allow for an abstract representation of the protocol,
while the latter are not easy to understand (see the discussion in~\cref{sec:concl} for details).
The analysis in~\cite{GayN05} on the shortcomings of their CQP approach reports:
``The proliferation of channels is a consequence of the fact
that our type system associates a unique type with each channel.
Introducing session types would allow a single channel to be used for the entire protocol''.

Following this hint in~\cite{GayN05}, we propose a quantum extension to Multiparty Session Types
(MPSTs)~\cite{HondaYC16,Huttel+16},
dubbed Quantum MPSTs (QMPSTs),
as a formal session-typed language to describe quantum protocols.
QMPSTs provide both an abstract view -- the \emph{global type},
and a concrete view -- a multiparty system made of named \emph{quantum processes}.
The global type describes the expected pattern of interactions from a global viewpoint,
clarifying which communications take place, between which participants, and in which order.
QMPST processes describe the sequence of actions taken by each participant inside the protocol,
including quantum actions such as the application of unitary transformations or the measurement of qubits.
The two views are formally related --
\emph{type checking} ensures that processes actually behave as prescribed by the global type (\cref{thm:subject-reduction}).
The framework of MPSTs also ensures relevant communication properties by construction,
such as progress (\cref{thm:progress}).
Further, in our approach, messages will directly target participants,
denoted by their name,
avoiding the proliferation of channels (as in CQP),
and clarifying the expected sender and receiver of each message (cf.~\cref{ex:secret}).


\vspace{-0.5\baselineskip}
\paragraph*{Outline and Contributions}

The main contribution of this paper is a novel framework for specifying quantum protocols using MPSTs,
building on classical approaches, but extending them with quantum features.
There is no other extension of MPSTs for quantum protocols in the literature, to the best of our knowledge.
\begin{itemize}[leftmargin=*]
      \item In~\cref{sec:type}, we introduce our calculus for QMPSTs.
            At the level of types, the main novelty is that data include references to qubits,
            and such references need to be treated in a linear way (cf.~\cref{thm:unique})
            due to the no-cloning and no-deleting theorems~\cite{noclone} of quantum computing.
            At the level of processes, as mentioned before,
            we introduce quantum operations such as unitary transformations and measurement.
      \item In~\cref{sec:semantics}, we provide a concrete operational semantics for QMPSTs.
            We define a labelled transition system for quantum multiparty systems,
            that is both non-deterministic and probabilistic,
            to account for multiparty system communication,
            and the probabilistic nature of quantum measurements.
      \item In~\cref{sec:results}, we establish meta-theoretical results for our calculus.
            We prove the usual communication safety properties of MPSTs, namely,
            subject reduction (\cref{thm:subject-reduction}), session fidelity (\cref{thm:session-fidelity}),
            progress (\cref{thm:progress}), and type safety (\cref{thm:safety}),
            and quantum safety properties, namely,
            unique ownership of qubits (\cref{thm:unique}) and qubit safety (\cref{thm:qubit-safety}).
      \item In~\cref{sec:examples}, we show the expressive power and clarity of our approach
            by applying it to four quantum protocols from the literature,
            namely Teleportation (used as a running example), Secret Sharing (\cref{ex:secret}, in~\cref{sec:examples}),
            Bit-Commitment (\cref{ex:qbc}, in~\cref{sec:examples}), and Key Distribution
            \ifmain
            (\cref{ex:key}, deferred to the appendix for space reasons).
            \else
            (\cref{ex:key}).
            \fi
\end{itemize}
We remark that some familiarity with quantum computing is needed to understand the paper --
for standard notions, we refer to~\cite{nielsen2010quantum}.
\ifmain
      Details of proofs and auxiliary results can be found in the supplementary appendices.
\fi

\section{Type system}
\label{sec:type}

We start from a minimal MPSTs paradigm (such as the one in~\cite{gentleIntro}) to avoid cluttering the work with
technicalities that are not related to the interplay between MPSTs and quantum computing and which would spoil the
cleanliness of the language.
This approach is also useful in understanding which advanced features of MPSTs are useful in the quantum setting,
and which ones are unnecessary.

We follow the usual top-down approach to MPSTs, whereby we first give a global type describing the interactions in the
system from a global point of view. We then project the global type into local types, describing the required behavior
of each participant.
Local types are used to type check the processes describing participants.
The main novelty of our type system is that basic types include qubits ($\tyQBit$),
and qubit variables are treated linearly.

Amongst the common features in approaches to MPSTs (such as~\cite{ScalasY19,HondaYC16}),
we drop support for multiple sessions and delegation, and only use a single channel.
These features are not useful for our case studies in~\cref{sec:examples},
nor in other examples of quantum protocols we have looked at
(see, e.g., the library of quantum protocols mentioned above~\cite{qpzoo}).
The most important feature of MPSTs that we use is the notion of internal and external choice
(or send and receive, respectively), and selection and branching constructs in processes,
which allows processes to exchange both classical and quantum data.
Hence, following Selinger's famous slogan \emph{``quantum data + classical control''}
for quantum lambda-calculus~\cite{SelingerV06},
the slogan for our quantum process calculus is \emph{``quantum data + classical control + classical choice''}.

\begin{figure*}[t]%
  \centerline{\(
    \begin{array}{r@{\quad}c@{\quad}l@{\quad}l}
      \gtG      & \bnfdef                                                    &
      \gtComm{\roleP}{\roleQ}{i \in I}{\gtLab[i]}{\tyGround[i]}{\gtG[i]}
                &
      {\footnotesize\text{Transmission or Interaction}}
      \\[1em]
                & \bnfsep                                                    & \gtRec{\gtRecVar}{\gtG} \quad \bnfsep \quad \gtRecVar
      \quad \bnfsep \quad \gtEnd
                &
      \text{\footnotesize Recursion, Type variable, Termination}
      \\[1em]
      \stT
                & \bnfdef                                                    & \stExtSum{\roleP}{i \in I}{\stChoice{\stLab[i]}{\tyGround[i]} \stSeq \stT[i]}
                & \text{\footnotesize External choice (receive)}
      \\[1em]
                & \bnfsep                                                    & \stIntSum{\roleP}{i \in I}{\stChoice{\stLab[i]}{\tyGround[i]} \stSeq \stT[i]}
                & \text{\footnotesize Internal choice (send)}
      \\[1em]
                & \bnfsep                                                    & \stRec{\stRecVar}{\stT} \quad \bnfsep \quad \stRecVar      \quad \bnfsep \quad \stEnd
                & \text{\footnotesize Recursion, Type variable, Termination}
      \\[1em]
      \tyGround & \bnfdef                                                    & \tyBit \bnfsep \tyQBit \bnfsep \tyUnit \bnfsep \ldots
                & \text{\footnotesize Basic types}
    \end{array}
    \)}
  \caption{Syntax of types.}
  \label{fig:syntax-global-type}%
  \label{fig:syntax-local-type}%
  \label{fig:syntax-mpst}
\end{figure*}

We first define the syntax of types.
In the following definition,
$\tuple{q_i}{i \in I}$ and $\set{q_i}_{i \in I}$ denote indexed tuples and indexed sets, respectively.
We also denote tuples as $\tq$, leaving the indexing implicit.
We also write $\widetilde{\stEnvMap{x}{\stS}}$ to mean a tuple $\tx$ of typed variables, each of type $\stS$.

\begin{definition}[Syntax of types]
  The syntax of \emph{global types}, \emph{local types}, and \emph{basic types},
  ranged over respectively by $\gtG$, $\stT$ and $\tyGround$, is defined in~\cref{fig:syntax-global-type},
  where $\roleP, \roleQ$ are \emph{participants} (or roles) and $\stLab[i]$ denote \emph{labels}.
  Labels in a global type are assumed to be disjoint, and label sets are non-empty.
  Basic types include at least $\tyBit$s, $\tyUnit$, integers, and \emph{references to qubits}.
\end{definition}
In choices we may drop braces when the index set is a singleton.
References to qubits are needed since, due to entanglement, allowing
values of distinct qubits to be related (indeed measuring one of them may impact the value of the other), it is not possible to
provide separate descriptions of single qubits. Following~\cite{SelingerV06,dal2015applicative}, we hence rely on a
global qubit register, abstracting away from the actual physical location of
the qubits, to describe the whole quantum state of a system. References
are pointers (or names) to quantum states inside this register.

Labels are used as usual to communicate which branch is taken in a
choice. We drop labels when the choice has a single branch, and the
type of it, if it is $\tyUnit$.
Global recursive types are contractive, that is, each occurence of the recursive type variable inside the body of the
recursive definition is guarded.
We also assume the equi-recursive equation: $\gtRec{\gtRecVar}{\gtG} =
  \gtG\subst{\gtRecVar}{\gtRec{\gtRecVar}{\gtG}}$ holds.

\begin{example}[Quantum teleportation {\cite[Fig.~14]{gay18}} - Description and global type]\label{ex:teleportationGlobal}
  The quantum teleportation protocol~\cite{bennett1993teleporting}
  allows one to transmit a quantum state via a \emph{non-quantum} medium.
  A slightly different version of the same is
  reported in~\cite[Fig.~5]{GayN05} under the name of ``Quantum
  teleportation with EPR source''.

  In order to more easily model the example we extend our
  data types with $n$-uples of bits, denoted $\tyBit^n$.
  We can then write the global
  type below. We use a dedicated participant called
  $\roleEnv$, for environment, where other participants can take
  input from and send output to. This avoids the need for specific
  input and output primitives as in~\cite{gay18}.
  \[
    \begin{array}{l}
      \gtG = \trans{\roleEnv}{\roleAlice}{(\tyQBit)} \gtSeq
      \trans{\roleSource}{\roleAlice}{(\tyQBit)} \gtSeq
      \trans{\roleSource}{\roleBob}{(\tyQBit)} \gtSeq \\
      \qquad \qquad \trans{\roleAlice}{\roleBob}{(\tyBit^2)} \gtSeq
      \gtCommSingle{\roleBob}{\roleEnv}{}{\tyQBit}{
        \gtEnd
      }
    \end{array}
  \]
  The idea is that $\roleAlice$ takes a $\tyQBit$ from
  $\roleEnv$, and wants to send it to $\roleBob$ via a classical
  communication channel (indeed
  $\roleAlice$ only sends to $\roleBob$ two classical $\tyBit$s).
  $\roleBob$ can then give the $\tyQBit$ to $\roleEnv$.
  In order to do so, $\roleAlice$ and $\roleBob$ take from some
  $\roleSource$ two entangled qubits.\finex
\end{example}

The set of roles or participants in a global type $\gtG$ is given by $\gtRoles{\gtG}$:
\[
  \begin{array}{rcl}
    \gtRoles{\gtComm{\roleP}{\roleQ}{i \in I}{\gtLab[i]}{\tyGround[i]}{\gtG[i]}} & = & \set{\roleP, \roleQ} \cup \bigcup_{i \in I} \gtRoles{\gtG[i]} \\
    \gtRoles{\gtRec{\gtRecVar}{\gtG}}                                            & = & \gtRoles{\gtG}                                                \\
    \gtRoles{\gtRecVar} = \gtRoles{\gtEnd}                                       & = & \emptyset
  \end{array}
\]
Projection allows one to derive which local type each participant needs to
follow in order for their composition to behave as prescribed by the
global type. We take the notion of projection below from~\cite{ScalasY19}.
\begin{definition}[Global Type Projection]%
  \label{def:global-proj}%
  \label{def:local-type-merge}%
  The \emph{projection of a global type $\gtG$ onto a role $\roleP$}, %
  written $\gtProj{\gtG}{\roleP}$, %
  is:

  \smallskip
  \centerline{\(%
    \begin{array}{c}
      \gtProj{\left(%
        \gtCommSmall{\roleQ}{\roleR}
        {i \in I}{\gtLab[i]}{\stS[i]}{\gtG[i]}%
        \right)}{\roleP}%
      =\!%
      \left\{%
      \begin{array}{@{}l@{\hskip 5mm}l@{}}
        \stIntSum{\roleR}{i \in I}{ %
          \stChoice{\stLab[i]}{\stS[i]} \stSeq (\gtProj{\gtG[i]}{\roleP})%
        }%
         & \text{\footnotesize %
          if\, $\roleP = \roleQ$%
        }%
        \\[2mm]%
        \stExtSum{\roleQ}{i \in I}{%
          \stChoice{\stLab[i]}{\stS[i]} \stSeq (\gtProj{\gtG[i]}{\roleP})%
        }%
         & \text{\footnotesize %
          if\, $\roleP = \roleR$
        }
        \\[2mm]%
        \stMerge{i \in I}{\gtProj{\gtG[i]}{\roleP}}%
         &                     %
        \text{\footnotesize%
          if\, $\roleP \neq \roleQ$ \,and \,$\roleP \neq \roleR$
        }%
      \end{array}
      \right.
      \\[8mm]%
      \gtProj{(\gtRec{\gtRecVar}{\gtG})}{\roleP}%
      \;=\;%
      \left\{%
      \begin{array}{@{\hskip 0.5mm}l@{\hskip 3mm}l@{}}
        \stRec{\stRecVar}{(\gtProj{\gtG}{\roleP})}%
         & %
        \text{\footnotesize%
          if\,
          $\roleP \in \gtRoles{\gtG}$ \,or\,
          $\fv{\gtRec{\gtRecVar}{\gtG}} \neq \emptyset$
        }%
        \\%
        \stEnd%
         & %
        \text{\footnotesize%
          otherwise}
      \end{array}
      \right.%
      \quad\qquad%
      \begin{array}{@{}r@{\hskip 1mm}c@{\hskip 1mm}l@{}}
        \gtProj{\gtRecVar}{\roleP}%
         & = & %
        \stRecVar%
        \\%
        \gtProj{\gtEnd}{\roleP}%
         & = & %
        \stEnd%
      \end{array}
    \end{array}
    \)}%

  \smallskip%
  \noindent%
  where
  $\stMerge{}{}$ is the full-merging operation, defined as:

  \smallskip%
  \centerline{\(%
    \begin{array}{c}%
      \textstyle%
      \stIntSum{\roleP}{i \in I}{\stChoice{\stLab[i]}{\stS[i]} \stSeq \stT[i]}%
      \,\stBinMerge\,%
      \stIntSum{\roleP}{i \in I}{\stChoice{\stLab[i]}{\stS[i]} \stSeq \stTi[i]}%
      \;=\;%
      \stIntSum{\roleP}{i \in I}{\stChoice{\stLab[i]}{\stS[i]} \stSeq (\stT[i]
        \stBinMerge \stTi[i])}%
      \\[1mm]%
      \stRec{\stRecVar}{\stT} \,\stBinMerge\, \stRec{\stRecVar}{\stTi}%
      \,=\,%
      \stRec{\stRecVar}{(\stT \stBinMerge \stTi)}%
      \quad%
      \stRecVar \,\stBinMerge\, \stRecVar%
      \,=\,%
      \stRecVar%
      \quad%
      \stEnd \,\stBinMerge\, \stEnd%
      \,=\,%
      \stEnd%
      \\[1mm]
      \stExtSum{\roleP}{i \in I}{\stChoice{\stLab[i]}{\stS[i]} \stSeq \stT[i]}%
      \,\stBinMerge\,%
      \stExtSum{\roleP}{\!j \in J}{\stChoice{\stLab[j]}{\stS[j]} \stSeq \stTi[j]}%
      \;=\;  \stExtSum{\roleP}{\!k \in I \cup J}{\stChoice{\stLab[k]}{\stS[k]} \stSeq \stTii[k]}
      \\[1mm]
      \text{where} \quad %
      \stTii[k]
      \;=\;%
      \left\{%
      \begin{array}{@{\hskip 0.5mm}l@{\hskip 3mm}l@{}}
        \stT[k]  \stBinMerge \stTi[k]
         & %
        \text{\footnotesize%
          if\,
          $\stFmt{k} \in \stFmt{I \cap J}$
        }%
        \\%
        \stT[k]
         & %
        \text{\footnotesize%
          if\,
          $\stFmt{k} \in \stFmt{I \setminus J}$
        }%
        \\%
        \stTi[k]
         & %
        \text{\footnotesize%
          if\,
          $\stFmt{k} \in \stFmt{J \setminus I}$
        }
      \end{array}
      \right.%
    \end{array}
    \)}%
\end{definition}
\noindent
Intuitively, participants involved in a choice act as prescribed by
the choice. Participants which are not involved need either to behave
the same in the two branches, or to receive messages with distinct
labels telling them how to behave.

\begin{example}[Quantum teleportation - Local types]
  We can obtain the local types by projecting the global type in~\cref{ex:teleportationGlobal} on the different participants:
  \begin{eqnarray*}
    \gtProj{\gtG}{\roleFmt{Alice}} &=& \stExtSumSing{\roleFmt{Env}}{}{\stChoice{}{\tyQBit} \stSeq \stExtSumSing{\roleFmt{Source}}{}{\stChoice{}{\tyQBit} \stSeq \stIntSumSing{\roleFmt{Bob}}{}{\stChoice{}{\tyBit^2} \stSeq \stEnd}}}\\
    \gtProj{\gtG}{\roleFmt{Bob}} &=& \stExtSumSing{\roleFmt{Source}}{}{\stChoice{}{\tyQBit} \stSeq \stExtSumSing{\roleFmt{Alice}}{}{\stChoice{}{\tyBit^2} \stSeq \stIntSumSing{\roleFmt{Env}}{}{\stChoice{}{\tyQBit} \stSeq \stEnd}}}\\
    \gtProj{\gtG}{\roleFmt{Source}} &=& \stIntSumSing{\roleFmt{Alice}}{}{\stChoice{}{\tyQBit} \stSeq \stIntSumSing{\roleFmt{Bob}}{}{\stChoice{}{\tyQBit} \stSeq \stEnd}}\\
    \gtProj{\gtG}{\roleFmt{Env}} &=& \stIntSumSing{\roleFmt{Alice}}{}{\stChoice{}{\tyQBit} \stSeq \stExtSumSing{\roleFmt{Bob}}{}{\stChoice{}{\tyQBit} \stSeq \stEnd}} \hspace{4.3cm} \diamond  \end{eqnarray*}
\end{example}

A global type is well-formed if it is contractive, and can be projected onto all its roles.
We implicitly assume that all global types are well-formed.

\begin{figure*}[t]
  \centerline{\(
    \begin{array}{c}
      \inference[\iruleGtMoveComm]{
      j \in I
      }{
      \gtCommSmall{\roleP}{\roleQ}{i \in I}{\gtLab[i]}{\tyGround[i]}{\gtGi[i]}
      \,\gtMove[
      \gtCommLab{\roleP}{\roleQ}{}:{{\gtLab[j]}({\tyGround[j]})}
      ]\,
      \gtGi[j]
      }
      \\[1em]
      \inference[\iruleGtMoveCtx]{
        \forall i \in I
        \
        \gtGi[i]
        \,\gtMove[\stEnvAnnotGenericSym]\,
        \gtGii[i]
       &
        \ltsRoles{\stEnvAnnotGenericSym} \cap \setenum{\roleP, \roleQ} =
        \emptyset
      }{
        \gtCommSmall{\roleP}{\roleQ}{i \in I}{\gtLab[i]}{\stS[i]}{\gtGi[i]}
        \,\gtMove[\stEnvAnnotGenericSym]\,
        \gtCommSmall{\roleP}{\roleQ}{i \in I}{\gtLab[i]}{\stS[i]}{\gtGii[i]}
      }
    \end{array}
    \)}
  \caption{Global type transitions}
  \label{fig:gtype:red-rules}
\end{figure*}

\begin{definition}[Global Type Transitions]
  \label{def:gtype:lts-gt}
  The \emph{global type transition relation} %
  $\gtMove[\stEnvAnnotGenericSym]$ is defined in~\cref{fig:gtype:red-rules}.
  We write
  $\gtG \,\gtMove\, \gtGi$
  if there exists $\stEnvAnnotGenericSym$ such that
  $\gtG \,\gtMove[\stEnvAnnotGenericSym]\,\gtGi$.
  We also write $\gtG \,\gtMove$
  if there exists $\gtGi$ such that
  $\gtG \,\gtMove\, \gtGi$;
  $\gtMoveStar$ denotes the reflexive and transitive closure of $\gtMove$.
\end{definition}
Rule \iruleGtMoveComm~in~\cref{fig:gtype:red-rules} just allows one to execute
the first interaction. Rule \iruleGtMoveCtx~allows one to execute an
interaction $\stEnvAnnotGenericSym$ available in all the continuations provided that it has
disjoint participants from the first one.
Sets of local types projected from a global type can be used to type programs.
We extend the type system in~\cite[Fig.~12]{gay18}, which
however requires channels to carry uniform values. Like in~\cite{GayN05},
qubits are treated linearly.

The syntax of systems, processes and expressions is
in \cref{fig:syntax-processes}. We take inspiration from the syntax
in \cite[Fig.~11]{gay18}, with a few changes to fit MPSTs.
First, for us a system is a parallel composition of
processes with roles, and, send and receive actions explicitly refer to the target
role $\roleP$. This makes the protocol clearer than using channels, and helps
the technical development as well. Then, as customary for session
types, we have input choice to allow a participant to wait for several
input messages, and accept the first one that arrives. We use labels to
clarify which branch of a choice is selected.  Finally, we avoid the
input/output constructs in~\cite[Fig.~11]{gay18} (we
model them as interactions with some $\roleEnv$ironment role, as done in our running example).

We assume a set $\varSet$ for variables, which may contain either
classical values, or references to the qubit register which contains
actual quantum information. We assume that operators in expressions
only work on classical data, operations on $\tyQBit$s are explicitly
modeled as unitary operations. We assume to have a library of unitary
operations (e.g., controlled not, Hadamard, ...), ranged over by $\expUnitaryOp$. Process $\mpNil[\tq]$ is labelled with the tuple $\tq$ of $\tyQBit$s it owns. This is needed to ensure linearity of $\tyQBit$s, and to clarify which participant owns which $\tyQBit$. This is relevant since due to entanglment (s)he (in a follow up protocol) may influence other $\tyQBit$s entangled with its own.
We write just $\mpNil$ when $\tq$ is empty.





\begin{figure*}
  \centerline{\(
    \begin{array}{r@{\hskip 2mm}c@{\hskip 2mm}l@{\hskip 5mm}l}
      \expE      & \bnfdef & x                                                       & \mbox{\footnotesize(variable)}                                                     \\[0.5em]
                 & \bnfsep & q                                                       & \mbox{\footnotesize(reference inside the qubit register)}                                                    \\[0.5em]
                 & \bnfsep & c                                                       & \mbox{\footnotesize(constant)}                                                     \\[0.5em]
                 & \bnfsep & \expE \expOp \expE                                      & \mbox{\footnotesize(operation)}                                                    \\[0.5em]
      \expQBit   & \bnfdef & x                                                       & \mbox{\footnotesize(variable)}                                                     \\[0.5em]
                 & \bnfsep & q                                                       & \mbox{\footnotesize(reference inside the qubit register)}                                                    \\[0.5em]
      \mpP, \mpQ & \bnfdef & \mpNil[\tz]                                             & \mbox{\footnotesize(inaction, owning qubits in $\tz$)}                             \\[0.5em]
                 & \bnfsep & \mpSel{}{\roleP}{\mpLab}{\mpExpr}{\mpP}                 & \mbox{\footnotesize(selection/send towards role $\roleP$)}                         \\[0.5em]
                 & \bnfsep & \mpBranch{}{\roleP}{i \in I}{\mpLab[i]}{x_i}{\mpP[i]}{} & \mbox{\footnotesize(branching/receive from role $\roleP$ with $I \neq \emptyset$)} \\[0.5em]
                 & \bnfsep & \mpDef{\mpX}{\tx}{\mpP}{\mpQ}                           & \mbox{\footnotesize(process definition)}                                           \\[0.5em]
                 & \bnfsep & \mpCallSmall{\mpX}{\widetilde{\mpExpr}}                 & \mbox{\footnotesize(process call)}                                                 \\[0.5em]
                 & \bnfsep & \mpIf{e}{\mpP}{\mpQ}                                    & \mbox{\footnotesize(conditional)}                                                  \\[0.5em]
                 & \bnfsep & \expMeasure{x}{z} \mpSeq \mpP                           & \mbox{\footnotesize(measurement)}                                                  \\[0.5em]
                 & \bnfsep & \expNewQBit[x] \mpSeq \mpP                              & \mbox{\footnotesize(new qubit generation)}                                         \\[0.5em]
                 & \bnfsep & \expUnitary[]{\tz} \mpSeq \mpP                          & \mbox{\footnotesize(unitary operation)}                                            \\[0.5em]
      \sysM      & \bnfdef & {\prod\limits_{i \in I} \mpProc{\roleP[i]}{\mpP[i]}}    & \mbox{\footnotesize(multiparty system)}
    \end{array}
    \)}
  \caption{Syntax of expressions, processes, and systems}
  \label{fig:syntax-processes}
\end{figure*}

Note that $\prod_{i \in I} \mpProc{\roleP[i]}{\mpP[i]}$ is shorthand for
$\mpProc{\roleP[1]}{\mpP[1]} \pp \dots \pp \mpProc{\roleP[n]}{\mpP[n]}$,
where $\pp$~is the parallel composition operator, which is associative and commutative.
For simplicity, we frequently assume that the tuples of formal parameters $\tx$ in process definitions and the tuples of
expressions for actual parameters $\widetilde{\mpExpr}$ in process calls are split into standard and quantum ones.

\begin{example}[Quantum teleportation - Processes]
  We can now write an actual system implementing the quantum teleportation protocol:
  \[
    \sysM = \mpProc{\roleAlice}{\mpP[A]} \pp \mpProc{\roleBob}{\mpP[B]} \pp \mpProc{\roleSource}{\mpP[S]} \pp \mpProc{\roleEnv}{\mpP[E]}
  \]
  where:
  \[
    \begin{array}{rcl}
      \mpP[A] & = &
      \mpBranchSingle{}{\roleEnv}{}{w}{
        \mpBranchSingle{}{\roleSource}{}{x}{
          \expUnitary[CNot]{w,x}
          \mpSeq
          \expUnitary[H]{w}
          \mpSeq 
          \expMeasure{\tilde r}{(w,x)}
          \mpSeq
          \mpSel{}{\roleBob}{}{\tilde r}{\mpNil}
      }}                                 \\
      \mpP[B] & = &
      \mpBranchSingle{}{\roleSource}{}{y}{
        \mpBranchSingle{}{\roleAlice}{}{\tilde r}{
          \expUnitary[\sigma_{\tilde r}]{y}
          \mpSeq
      \mpSel{}{\roleEnv}{}{y}{\mpNil}}}  \\
      \mpP[S] & = &
      \expNewQBit[x_s,y_s] \mpSeq
      \expUnitary[H]{x_s} \mpSeq
      \expUnitary[CNot]{x_s,y_s} \mpSeq                                             
      \mpSel{}{\roleAlice}{}{x_s}{
      \mpSel{}{\roleBob}{}{y_s}{\mpNil}} \\
      \mpP[E] & = &
      \mpSel{}{\roleAlice}{}{q}{
        \mpBranchSingle{}{\roleBob}{}{y}{\mpNil[y]}}
    \end{array}
  \]


  $\roleAlice$ takes the desired $\tyQBit$s, then applies to them two unitary transformations: a controlled not $\expUnitaryOp[CNot]$ (which negates the second $\tyQBit$ if the first one is true, just propagates it otherwise -- and acts linearly if the states of the qubits are not classical states but a superposition of them) and a Hadamard gate $\expUnitaryOp[H]$ (which turns classical states into a superposition), measures the two resulting $\tyQBit$s obtaining two classical $\tyBit$s in $r$ (here we extended the $\measureOp$ operator to work on pairs) and sends them to $\roleBob$.

  $\roleBob$ gets the two $\tyBit$s and uses them to apply one of 4 Pauli transformations
  to apply to $y$,
  where $\expUnitaryOp[\sigma_0]$ is the identity matrix,
  and $\expUnitaryOp[\sigma_1], \expUnitaryOp[\sigma_2], \expUnitaryOp[\sigma_3]$
  are the three $2 \times 2$
  Pauli matrices~\footnote{We slightly abuse notation here, we could have modeled this choice in more details using conditionals to actually select one of the transformations.}.
  $\roleSource$ just creates a pair of entangled $\tyQBit$s, and sends its components to, respectively, $\roleAlice$ and $\roleBob$.
  $\roleEnv$ just gives the initial $\tyQBit$ to $\roleAlice$ and gets the result from $\roleBob$.\finex
\end{example}
\noindent
Now we describe the typing of processes and systems.
\begin{definition}[Typing contexts]
  \label{def:typing-contexts}
  $\mpEnv$ (used for process definitions and calls) maps process variables to session types prefixed by the types of the parameters,
  $\stEnv$ maps variables to basic types (except $\tyQBit$s), and
  $\qbEnv$ maps variables and references to (only) the $\tyQBit$ type.


  \smallskip
  \centerline{\(%
    \mpEnv
    \,\coloncolonequals\,
    \mpEnvEmpty
    \bnfsep
    \mpEnv \mpEnvComp\, \mpEnvMap{\mpX}{(\widetilde{\stS};\stT)}
    \quad
    \stEnv
    \,\coloncolonequals\,
    \stEnvEmpty
    \bnfsep
    \stEnv \stEnvComp \stEnvMap{x}{\tyGround}
    \ (\tyGround \neq \tyQBit)
    \quad
    \qbEnv
    \,\coloncolonequals\,
    \qbEnvEmpty
    \bnfsep
    \qbEnv \qbEnvComp \qbEnvMap{z}{\tyQBit}
    \)}%
\end{definition}
Notably, following DILL~\cite{barber1996dual},
we use dual contexts for intuitionistic (classical data) variables and linear (quantum data) variables.
Since we omit delegation, there are no channel variables in contexts $\stEnv$ or $\qbEnv$.
We write $\qbEnv[q]$ when $\qbEnv$ contains only qubit references,
and omit contexts in judgements when empty and unambiguous.

\begin{definition}[Typing Judgements]
  \label{def:typing-judgements}
  Typing judgements are of the form: 

  \smallskip
  \centerline{%
    \(\qstJudge{}{\stEnv}{\qbEnv}{e}{\tyGround}\)
    \quad\quad
    \(\qstJudge{\mpEnv}{\stEnv}{\qbEnv}{\mpP}{\stT}\)
    \quad\quad
    \(\qstJudge{}{\stEnv}{\qbEnv}{\mathcal{M}}{\gtG}\)
  }
\end{definition}
\begin{figure}[t]
  \centerline{\(
    \begin{array}{c}
      \inference[\iruleVar]{
        \stEnvMap{x}{\tyGround} \in \stEnv
      }{
        \qstJudge{}{\stEnv}{\qbEnvEmpty}{x}{\tyGround}
      }
      \qquad
      \inference[\iruleQVar]{
      }{
        \qstJudge{}{\stEnv}{\qbEnvMap{z}{\tyQBit}}{z}{\tyQBit}
      }
      \qquad
      \inference[\iruleBit]{}{
        \stJudge{}{\stEnv}{0,1}{\tyBit}
      }
      \\[1em]
      \inference[\iruleQBit]{
        \qstJudge{\mpEnv}{\stEnv}{\qbEnv \qbEnvComp \qbEnvMap{x}{\tyQBit}}{\mpP}{\stT}
      }{
        \qstJudge{\mpEnv}{\stEnv}{\qbEnv}{\expNewQBit[x] \mpSeq \mpP}{\stT}
      }
      \qquad
      \inference[\iruleMeas]{
        \qstJudge{}{\stEnv}{\qbEnv[1]}{z}{\tyQBit}
        \quad
        \qstJudge{\mpEnv}{\stEnv \stEnvComp \stEnvMap{x}{\tyBit}}{\qbEnv[2]}{\mpP}{\stT}
      }{
        \qstJudge{\mpEnv}{\stEnv}{\qbEnv}{\expMeasure{x}{z} \mpSeq \mpP}{\stT}
      }
      \\[1em]
      \inference[\iruleUnitary]{
        \qstJudge{\mpEnv}{\stEnv}{\qbEnv \qbEnvComp \widetilde{\qbEnvMap{z}{\tyQBit}}}{\mpP}{\stT}
      }{
        \qstJudge{\mpEnv}{\stEnv}{\qbEnv \qbEnvComp \widetilde{\qbEnvMap{z}{\tyQBit}}}{\expUnitary[]{\tz} \mpSeq \mpP}{\stT}
      }
      \qquad
      \inference[\iruleMPNil]{}{
        \qstJudge{\mpEnv}{\stEnv}{\widetilde{\qbEnvMap{z}{\tyQBit}}}{\mpNil[\tz]}{\stEnd}
      }
      \\[1em]
      \inference[\iruleMPSel]{
        \qstJudge{}{\stEnv}{\qbEnv[1]}{e}{\stS}
        \quad
        \qstJudge{\mpEnv}{\stEnv}{\qbEnv[2]}{\mpP}{\stT}
      }{
        \qstJudge{\mpEnv}{\stEnv}{\qbEnv}{\mpSel{}{\roleP}{\mpLab}{e}{\mpP}}{\stIntSum{\roleP}{}{\stChoice{\stLab}{\stS} \stSeq \stT}}
      }
      \\[1em]
      \inference[\iruleMPBranch]{
        \forall i \in I
        \
        \qstJudge{\mpEnv}{\stEnv \stEnvComp \stEnvMap{x_i}{\stS[i]}}{\qbEnv}{\mpP[i]}{\stT[i]}
        \quad
        \forall j \in J
        \
        \qstJudge{\mpEnv}{\stEnv}{\qbEnv \qbEnvComp \qbEnvMap{x_j}{B_j}}{\mpP[j]}{\stT[j]}
      }{
        \qstJudge{\mpEnv}{\stEnv}{\qbEnv}{\mpBranch{}{\roleP}{i \in I \uplus J}{\mpLab[i]}{x_i}{\mpP[i]}{}}{\stExtSum{\roleP}{i \in I \uplus J}{\stChoice{\stLab[i]}{\stS[i]} \stSeq \stT[i]}}
      }
      \\[1em]
      \inference[\iruleIfThen]{
        \qstJudge{}{\stEnv}{\qbEnv[1]}{e}{\tyBit}
        \quad
        \qstJudge{\mpEnv}{\stEnv}{\qbEnv[2]}{\mpP}{\stT}
        \quad
        \qstJudge{\mpEnv}{\stEnv}{\qbEnv[2]}{\mpQ}{\stT}
      }{
        \qstJudge{\mpEnv}{\stEnv}{\qbEnv}{\mpIf{e}{\mpP}{\mpQ}}{\stT}
      }
      \\[1em]
      \inference[\iruleMPIfThen]{
        \qstJudge{}{\stEnv}{\qbEnv[1]}{e}{\tyBit}
        \quad
        \qstJudge{\mpEnv}{\stEnv}{\qbEnv[2]}{\mpP}{\stIntSum{\roleP}{i \in I}{\stChoice{\stLab[i]}{\stS[i]} \stSeq \stT[i]}}
        \quad
        \qstJudge{\mpEnv}{\stEnv}{\qbEnv[2]}{\mpQ}{\stIntSum{\roleP}{i \in J}{\stChoice{\stLab[i]}{\stS[i]} \stSeq \stT[i]}}
      }{
        \qstJudge{\mpEnv}{\stEnv}{\qbEnv}{\mpIf{e}{\mpP}{\mpQ}}{\stIntSum{\roleP}{i \in I \uplus J}{\stChoice{\stLab[i]}{\stS[i]} \stSeq \stT[i]}
        }}
      \\[1em]
      \inference[\iruleMPDef]{
        \qstJudge{\mpEnv \mpEnvComp \mpEnvMap{\mpX}{(\tilde{\stS},\tilde{\stSi};\stT[1])}}{\stEnv \stEnvComp \widetilde{\stEnvMap{x}{\stS}}}{\qbEnv[1] \qbEnvComp \widetilde{\qbEnvMap{x'}{\stSi}}}{\mpP}{\stT[1]}
        \quad
        \qstJudge{\mpEnv \mpEnvComp \mpEnvMap{\mpX}{(\tilde{\stS},\tilde{\stSi};\stT[1])}}{\stEnv}{\qbEnv[2]}{\mpQ}{\stT[2]}
      }{
        \qstJudge{\mpEnv}{\stEnv}{\qbEnv}{\mpDef{\mpX}{\tx,\tx'}{\mpP}{\mpQ}}{\gtRec{\mpX}{\stT[2]}}
      }
      \\[1em]
      \inference[\iruleMPCall]{
        \forall i \in [n]
        \quad
        \qstJudge{}{\stEnv}{\qbEnv[i]}{e_i}{\stS_i}
      }{
        \qstJudge{\mpEnv \mpEnvComp \mpEnvMap{X}{(\widetilde{\stS};\stT)}}{\stEnv}{\qbEnv}{\mpCallSmall{\mpX}{\tilde{e}}}{\stT}
      }
      \qquad
      \inference[\iruleMPSystem]{
        \forall \roleP[i] \in \gtRoles{\gtG}
        \qquad
        \qstJudge{}{\stEnv}{\qbEnv[i]}{\mpP[i]}{\gtProj{\gtG}{\roleP[i]}}
      }{
        \qstJudge{}{\stEnv}{\qbEnv}{\prod\limits_{\roleP[i] \in \gtRoles{\gtG}} \mpProc{\roleP[i]}{\mpP[i]}}{\gtG}
      }
    \end{array}
    \)}%
  \caption{
    Typing rules for expressions, processes, and systems
  }
  \label{fig:mpst-rules}
\end{figure}
\begin{definition}[Typing Rules]
  The typing rules are defined in~\cref{fig:mpst-rules}.
  Typing rules for operators and constants are standard and omitted (but for 0 and 1).
  The $\qbEnv$ used in the conclusion of each typing rule in~\cref{fig:mpst-rules} is the concatenation of the (different) $\qbEnv[i]$s
  used in the premises.
\end{definition}
The rules are mostly self-explanatory. Notably, due to linearity of
$\tyQBit$s, when a process sends a $\tyQBit$ (rule \iruleMPSel) it
loses the ownership (indeed, $\qbEnv$ is split into $\qbEnv[1]$ used
to type the payload of the message and $\qbEnv[2]$ used to type the
continuation). Dually, when a process receives a $\tyQBit$
(rule \iruleMPBranch) it gains ownership on it. Indeed, it can be used
to type the continuation. Note that the two sets of premises are
needed to cover two cases depending on whether the received value is
classical (first case, added to $\stEnv$) or quantum (second case,
added to $\qbEnv$). We have two rules for conditional, \iruleIfThen~applies when the two branches follow the same local type while \iruleMPIfThen~when they contribute with different message send. In both the cases, both the premises use the same $\qbEnv[2]$ since in each computation only one of the branches is executed.
In rule~\iruleMPDef, $\mpEnv$ stores the type of the defined function so that it can then be checked against the invocation in rule~\iruleMPCall. There $\tilde{\stS}$ is shorthand for $\stS[1], \ldots, \stS[n]$, and $\tilde{e}$ is shorthand
for $e_1, \ldots, e_n$.

\begin{example}[Quantum teleportation - Typing]
  Using the typing rules we can derive the judgment:
  \[
    \qstJudge{}{\stEnvEmpty}{\stEnvMap{q}{\tyQBit}}{\sysM}{\gtG}
  \]
  We show the typing of $\roleAlice$ as an example in~\cref{fig:qtAlice}.\finex
  \begin{figure*}[t]
    \scalebox{\scalef}{
      \inference[]{
        \inference[]{
          \inference[]{
            \inference[]{
              \inference[]{
                \inference[]{
                  \stJudge{}{\stEnvMap{\tilde r}{\tyBit^2}}{
                    \tilde r}
                  { \tyBit^2}
                  \qquad
                  \stJudge{\emptyset}{\stEnvMap{\tilde r}{\tyBit^2}}{
                    \mpNil}
                  { \stEnd}
                }{
                  \stJudge{\emptyset}{\stEnvMap{\tilde r}{\tyBit^2}}{
                    \mpSel{}{\roleBob}{}{\tilde r}{\mpNil}}
                  { \stIntSumSing{\roleFmt{Bob}}{}{\stChoice{}{\tyBit^2} \stSeq \stEnd}}
                }}{
                \stJudge{\emptyset}{\stEnvMap{w}{\tyQBit},\stEnvMap{x}{\tyQBit}}{
                  \expMeasure{\tilde r}{(w,x)}
                  \mpSeq
                  \mpSel{}{\roleBob}{}{\tilde r}{\mpNil}}
                { \stIntSumSing{\roleFmt{Bob}}{}{\stChoice{}{\tyBit^2} \stSeq \stEnd}}
              }}{
              \stJudge{\emptyset}{\stEnvMap{w}{\tyQBit},\stEnvMap{x}{\tyQBit}}{
                \expUnitary[H]{w}
                \mpSeq
                \expMeasure{\tilde r}{(w,x)}
                \mpSeq
                \mpSel{}{\roleBob}{}{\tilde r}{\mpNil}}
              { \stIntSumSing{\roleFmt{Bob}}{}{\stChoice{}{\tyBit^2} \stSeq \stEnd}}
            }
          }{
            \stJudge{\emptyset}{\stEnvMap{w}{\tyQBit},\stEnvMap{x}{\tyQBit}}{
              \expUnitary[CNot]{w,x}
              \mpSeq
              \expUnitary[H]{w}
              \mpSeq
              \expMeasure{\tilde r}{(w,x)}
              \mpSeq
              \mpSel{}{\roleBob}{}{\tilde r}{\mpNil}}
            { \stIntSumSing{\roleFmt{Bob}}{}{\stChoice{}{\tyBit^2} \stSeq \stEnd}}
          }}
        {
          \stJudge{\emptyset}{\stEnvMap{w}{\tyQBit}}{
            \mpBranchSingle{}{\roleSource}{}{x}{
              \expUnitary[CNot]{w,x}
              \mpSeq
              \expUnitary[H]{w}
              \mpSeq
              \expMeasure{\tilde r}{(w,x)}
              \mpSeq
              \mpSel{}{\roleBob}{}{\tilde r}{\mpNil}}}
          { \stExtSumSing{\roleFmt{Source}}{}{\stChoice{}{\tyQBit} \stSeq \stIntSumSing{\roleFmt{Bob}}{}{\stChoice{}{\tyBit^2} \stSeq \stEnd}}}}
      }{
        \stJudge{\emptyset}{\emptyset}{\mpBranchSingle{}{\roleEnv}{}{w}{
            \mpBranchSingle{}{\roleSource}{}{x}{
              \expUnitary[CNot]{w,x}
              \mpSeq
              \expUnitary[H]{w}
              \mpSeq
              \expMeasure{\tilde r}{(w,x)}
              \mpSeq
              \mpSel{}{\roleBob}{}{\tilde r}{\mpNil}}}}{\gtProj{\gtG}{\roleFmt{Alice}}}
      }
    }
    \caption{Typing for Alice in Quantum Teleportation}\label{fig:qtAlice}
  \end{figure*}
\end{example}

\section{Semantics}
\label{sec:semantics}

The semantics that we present combines the management of quantum state
following the approach in~\cite{dal2015applicative,SelingerV06} (for a
quantum $\lambda$-calculus), with classical semantics for MPST systems,
such as in~\cite{gentleIntro,ScalasY19}. A main
novelty is that we need to combine non-determinism (typical of process
calculi but absent in $\lambda$-calculi) with probabilities
originating from measurements of quantum states.

%
Given a finite set of qubit variables $\qregone$, called a quantum register,
we denote with $\hilb{\qregone}$ the Hilbert space generated by all functions $\qregone \to \{0,1\}$, denoted as $\basis{\qregone}$,
which is the computational basis for $\hilb{\qregone}$.
Elements of $\basis{\qregone}$, are of the form $\ket{q_1 \leftarrow b_1, \ldots, q_n \leftarrow b_n}$
and ranged over by~$\ket{\eta}$,
for every ${q_i \in \qregone}$ and ${b_i \in \{0,1\}}$.
Hence, elements of $\hilb{\qregone}$ are elements of Hilbert spaces with a chosen basis,
that is, linear combinations of elements of $\basis{\qregone}$,
written $\qbregone = \sum_{\ket{\eta} \in \basis{\qregone}}{\alpha_{\eta}}{\ket{\eta}}$,
where ${\alpha_{\eta} \in \cmplx}$ are complex probability amplitudes
satisfying $\sum_{\ket{\eta} \in \basis{\qregone}} \abs{\alpha_{\eta}}^2 = 1$.
The probability operator~$\qprob_b^q \colon {\hilb{\qregone} \to \rint}$ is given by
\(
\qprob_b^q(\qbregone) = \sum_{\eta(q) = b} \abs{\alpha_{\eta}}^2
\).
For ${q \in \qregone}$, the projection operator~$\qproj_b^q \colon \hilb{\qregone} \to \hilb{\setsub{\qregone}{q}}$
is given by
\(
\qproj_b^q(\qbregone) =
\sfrac{1}{\sqrt{\qprob_b^q(\qbregone)}}
\sum_{\ket{\eta} \in \basis{\setsub{\qregone}{q}}} {\alpha_{\eta\{q\leftarrow b\}}}{\ket{\eta}}
\).
For ${q \not\in \qregone}$, the new operator~$\qnew_q \colon \hilb{\qregone} \to \hilb{\setadd{\qregone}{q}}$ is given by
\(
\qnew_q(\qbregone) = \qbregone \otimes \ket{q \leftarrow 0}
\)
which assigns $\ket{0}$ to $q$.
Qubits can be modified by unitary operations, written as $U_{\tq}\colon \hilb{\qregone} \to \hilb{\qregone}$,
where $\tq$ is a sequence of distinct qubit variables.


The semantics, defined in~\cref{fig:concredsem}, is a transition system over configurations of the form $\pair{\qbregone}{\sysone}$,
where $\qbregone$ is the quantum state associated to the underlying quantum register $\qregone$,
and $\sysone$ is a multiparty session-typed system.
The semantics is both non-deterministic and probabilistic
(it forms a GPLTS, in the sense of~\cite[Def~2.5]{bernardo2013uniform}) --
the multiparty process behavior is non-deterministic
(in the choice of redex for single processes which do some computation, or pairs of processes that communicate),
and the quantum operations are probabilistic
(in the choice of measurement outcome). 
The transition relation is:
\(
\longrightarrow~{\subseteq}~\concconf \times \labels \times \rint \times \concconf
\),
where $\labels$ is the set of labels: either $\trans{\roleP}{\roleQ}{v}$ denoting a communication step,
or $\roleP$ denoting a computation step (for role names $\roleP$ and $\roleQ$).
We use $\prob$ to range over probabilities.
The semantics relies on a semantics of expressions as well. We do not present it since it is classical,
and write $\expEval{e}{v}$ to denote that expression $e$ evaluates to value $v$.
The evaluation relation on tuples $\expEval{\te}{\td}$ is taken pointwise.

Rule~\inferrule{\ruleRedComm} allows processes to communicate.
Rules~\inferrule{\ruleRedMeasZero} and~\inferrule{\ruleRedMeasOne} define measurements,
and are the only rules defining transitions with non-1 probability.
They also consume the measured qubit.
Dually, rule~\inferrule{\ruleRedNew} extends the register with a new qubit.
Note that there are no transitions for $\mpNil[\tq]$ processes.
%
%
%
\begin{figure*}[t]
	\centerline{\(
		\begin{array}{c}
			\inference[\ruleRedComm]
			{ j \in I \qquad \mpLab = \mpLab[j] \qquad \expEval{e}{v}}
			{
				\pair{\qbregone}{\para{\named{\roleP}{\mpSel{}{\roleQ}{\mpLab}{e}{\mpP}}}{\named{\roleQ}{\mpBranch{}{\roleP}{i \in I}{\mpLab[i]}{x_i}{\mpQ[i]}{}}}}
				\plts{\trans{\roleP}{\roleQ}{v}}{}
				\pair{\qbregone}{\para{\named{\roleP}{P}}{\named{\roleQ}{\mpQ[j]\subst{x_j}{v}}}}
			}
			\\[1em]
			\inference[\ruleRedDef]
			{\expEval{\te}{\td}}
			{
				\pair{\qbregone}{\named{\roleP}{\mpDef{\mpX}{\tx}{\mpP}{\mpQ}}}
				\plts{\roleP}{}
				\pair{\qbregone}{\named{\roleP}{Q\subst{\mpX(\te)}{P\subst{\tx}{\td}}}}
			}
			\\[1em]
			\inference[\ruleRedNew]{
				x \not\in \qregone
			}{
				\pair{\qbregone}{\mpProc{\roleP}{\expNewQBit[x] \mpSeq \mpP}}
				\plts{\roleP}{}
				\pair{\qnew_x(\qbregone)}{\mpProc{\roleP}{\mpP}}
			}
			\quad
			\inference[\ruleRedUnitary]
			{ \tz\subseteq\qregone }
			{
				\pair{\qbregone}{\named{\roleP}{\expUnitary[]{\tz}.P}}
				\plts{\roleP}{}
				\pair{\expUnitaryOp[]_{\tz}(\qbregone)}{\named{\roleP}{P}}
			}
			\\[1em]
			\inference[\ruleRedMeasZero]
			{ z \in\qregone \qquad \prob = \qprob_0^z(\qbregone) }
			{
				\pair{\qbregone}{\named{\roleP}{\measure{x}{z}.P}}
				\plts{\roleP}{\prob}
				\pair{\qproj_0^z(\qbregone)}{\named{\roleP}{P\subst{x}{0}}}
			}
			\\[1em]
			\inference[\ruleRedMeasOne]
			{ z \in\qregone \qquad \prob = \qprob_1^z(\qbregone) }
			{
				\pair{\qbregone}{\named{\roleP}{\measure{x}{z}.P}}
				\plts{\roleP}{\prob}
				\pair{\qproj_1^z(\qbregone)}{\named{\roleP}{P\subst{x}{1}}}
			}
			\\[1em]
			\inference[\ruleRedIfTrue]
			{ \expEval{e}{1} }
			{
				\pair{\qbregone}{\named{\roleP}{\mpIf{e}{\mpP}{\mpQ}}}
				\plts{\roleP}{}
				\pair{\qbregone}{\named{\roleP}{\mpP}}
			}
			\quad
			\inference[\ruleRedIfFalse]
			{ \expEval{e}{0} }
			{
				\pair{\qbregone}{\named{\roleP}{\mpIf{e}{\mpP}{\mpQ}}}
				\plts{\roleP}{}
				\pair{\qbregone}{\named{\roleP}{\mpQ}}
			}
			\\[1em]
			\inference[\ruleRedCong]
			{
				\pair{\qbregone}{\sysone}
				\plts{\lblComm}{\prob}
				\pair{\qbregtwo}{\systwo}
			}
			{
				\pair{\qbregone}{\para{\sysone}{\mpP}}
				\plts{\lblComm}{\prob}
				\pair{\qbregtwo}{\para{\systwo}{\mpP}}
			}
		\end{array}
		\)}
	\caption{Semantic Rules}\label{fig:concredsem}
\end{figure*}
The transition system satisfies the following properties (expected for GPLTS):
\begin{propositionrep}
	\label{prop:concrete-semantics}
	For any $\qbregone, \sysone, \qbregtwo, \systwo, \alpha$,
	\begin{enumerate}[leftmargin=*]
		\item if $\pair{\qbregone}{\sysone} \plts{\alpha}{\prob_1} \pair{\qbregtwo}{\systwo}$
		      and $\pair{\qbregone}{\sysone} \plts{\alpha}{\prob_2} \pair{\qbregtwo}{\systwo}$,
		      then $\prob_1 = \prob_2$, and
		\item it holds that
		      \(
		      {\sum\set{\prob \in \rint | \pair{\qbregone}{\sysone} \plts{\alpha}{\prob} \pair{\qbregtwo}{\systwo}} = 1}
		      \).
	\end{enumerate}
\end{propositionrep}
\begin{proof}
	One can notice that the only probabilities which are not $1$ originate
	from measurements. However, the first part holds since the two
	transitions corresponding to the two outcomes of a given measurement
	always lead to different states. This is also the only case where there are two transitions from the same state with the same label, and since the probabilities sum up to $1$ also the second part holds.
\end{proof}
We remark that labels in our semantics are just decorations to
simplify the statement of the results in the next section, hence the
semantics can be seen as an annotated reduction semantics.

Note that we could also give a more abstract description of the semantics, disregarding the
qubit register and probabilities.
In this setting, actions from a single participant are to be considered as invisible $\lblComp$ steps.
This allows one to more easily relate with MPSTs which indeed only describe communications,
but we choose to work the more concrete \emph{quantum} semantics, showing the quantum state and probabilities.
\begin{definition}[Multiparty System Transitions]
	We say $\sysone \lts{\lblComm} \systwo$ if $\exists \qbregone, \qbregtwo, \prob \neq 0$
	such that $\pair{\qbregone}{\sysone} \plts{\lblComm}{\prob} \pair{\qbregone}{\systwo}$.
	We say $\sysone \lts{\lblComp} \systwo$ if $\exists \roleP$,
	such that $\sysone \lts{\roleP} \systwo$.
	We say $\sysM \Lts{\lblComm} \sysMi$ if $\sysM \ltsStar{\ps{\lts{\lblComp}}} \,\lts{\lblComm}{}\, \ltsStar{\ps{\lts{\lblComp}}} \sysMi$.
\end{definition}


\section{Results}
\label{sec:results}




We prove two kinds of results -- the first kind are classical results adapted
from MPSTs theory, such as Subject Reduction and Session Fidelity (see,
e.g.,~\cite{gentleIntro,ScalasY19,CASTELLANI2023100844}), to the quantum extension.
Such results hold for \emph{closed} terms, namely terms with no free variables,
but, notably, they \emph{can have references} to the qubit register.
Hence the judgments in these statements have the form $\qstJudge{}{\stEnvEmpty}{\qbEnv[q]}{\sysM}{\gtG}$.
The other kind of results show that
$\tyQBit$s are treated correctly, linearly in particular.
\ifmain
Details of proofs and auxiliary results can be found in~\cref{apx:results}.
\fi

\begin{toappendix}
   \label{apx:results}
   \begin{lemma}
      \label{lem:contractive}
      If a global type $\gtG$ is contractive, then the projection $\gtProj{\gtG}{\roleP}$,
      for any $\roleP \in \gtRoles{\gtG}$, is contractive.
   \end{lemma}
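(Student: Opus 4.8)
The plan is to proceed by structural induction on $\gtG$. Since the induction descends into the continuations $\gtG[i]$ of a communication, onto which the \emph{same} role $\roleP$ is projected even when $\roleP \notin \gtRoles{\gtG[i]}$, I would first generalise the statement to: \emph{for every role $\roleP$ (whether or not $\roleP \in \gtRoles{\gtG}$), if $\gtProj{\gtG}{\roleP}$ is defined then it is contractive}; the lemma is then the special case $\roleP \in \gtRoles{\gtG}$. In parallel I would establish an auxiliary \emph{merge lemma}: if $\stT$ and $\stTi$ are contractive and $\stT \stBinMerge \stTi$ is defined, then $\stT \stBinMerge \stTi$ is contractive. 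This is proved by induction on the definition of $\stBinMerge$, observing that merge acts homomorphically on a common top-level constructor (a shared choice prefix, a shared binder $\stRec{\stRecVar}{-}$, a shared variable, or $\stEnd$), so that guarding prefixes and binders are preserved and only the sub-components are recombined.

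The base cases $\gtProj{\gtRecVar}{\roleP} = \stRecVar$ and $\gtProj{\gtEnd}{\roleP} = \stEnd$ are immediate. For a communication $\gtCommSmall{\roleQ}{\roleR}{i\in I}{\gtLab[i]}{\tyGround[i]}{\gtG[i]}$ there are three subcases. If $\roleP \in \set{\roleQ,\roleR}$, the projection is an internal or external choice whose branches are the $\gtProj{\gtG[i]}{\roleP}$; each is contractive by the induction hypothesis, and prefixing them with a choice constructor neither binds nor unguards any recursion variable, so the result is contractive. If $\roleP \notin \set{\roleQ,\roleR}$, the projection is $\stMerge{i\in I}{\gtProj{\gtG[i]}{\roleP}}$; each branch is contractive by induction, the merge is defined since $\gtProj{\gtG}{\roleP}$ is, and the merge lemma yields contractiveness.

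The recursion case is the crux and the main obstacle. For $\gtProj{(\gtRec{\gtRecVar}{\gtG})}{\roleP}$ the projection is either $\stEnd$ --- trivially contractive --- or $\stRec{\stRecVar}{(\gtProj{\gtG}{\roleP})}$. In the latter case the induction hypothesis gives that $\gtProj{\gtG}{\roleP}$ is contractive, but contractiveness of $\stRec{\stRecVar}{(\gtProj{\gtG}{\roleP})}$ additionally requires that $\stRecVar$ be \emph{guarded} in $\gtProj{\gtG}{\roleP}$. This is the delicate point: although $\gtRecVar$ is guarded in $\gtG$ by some communication prefix, projecting onto a role $\roleP$ that participates in neither endpoint of that prefix \emph{erases} it, and could in principle expose an unguarded $\stRecVar$. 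I would therefore strengthen the induction hypothesis to carry guard information, proving: \emph{if $\gtRecVar$ occurs guarded in $\gtG$ and $\gtProj{\gtG}{\roleP}$ is defined, then its image $\stRecVar$ is either guarded in $\gtProj{\gtG}{\roleP}$ or does not occur free in it}. The key observation is that a guard can only be erased inside a merge $\stMerge{i\in I}{\gtProj{\gtG[i]}{\roleP}}$, and there the mergeability requirement forces all branches to share the same shape: a branch whose projection is the bare variable $\stRecVar$ can be merged only with branches that also project to $\stRecVar$, which means $\roleP$ participates in no prefix on any path to that occurrence. Combined with the side condition selecting the $\stEnd$ case (taken exactly when $\roleP \notin \gtRoles{\gtG}$ and $\fv{\gtRec{\gtRecVar}{\gtG}} = \emptyset$), this rules out the dangerous configurations: whenever the projection genuinely retains the binder $\stRec{\stRecVar}{-}$, some $\roleP$-involving prefix survives above each occurrence of $\stRecVar$, so $\stRecVar$ stays guarded. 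The hardest part is precisely this bookkeeping across the merge operation, reconciling guard erasure with the mergeability constraints imposed by definedness of the projection.
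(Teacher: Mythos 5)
Your proof is correct and, at bottom, follows the same route as the paper's --- an analysis of the projection function with the recursion case as the crux --- but the paper's own argument is a two-line ``by inspection of the definition in the recursive case'', whereas you make explicit the two ingredients it leaves implicit: an auxiliary merge lemma (contractiveness is preserved by $\stBinMerge$ because merge acts homomorphically on a shared top-level constructor) and a strengthened induction hypothesis tracking guardedness of the projected variable. This is worthwhile, because the naive induction really does break exactly where you say it does: projecting onto a role not involved in a communication erases the guarding prefix, so $\gtRecVar$ guarded in $\gtG$ does not immediately give $\stRecVar$ guarded in $\gtProj{\gtG}{\roleP}$. Your resolution is the right one, though the last step can be sharpened: when $\stRecVar$ does survive unguarded in $\gtProj{\gtG}{\roleP}$, definedness of the merge (a bare variable merges only with itself, never with $\stEnd$, a choice, or a different variable) forces \emph{every} leaf of $\gtG$ to be $\gtRecVar$, hence $\roleP \notin \gtRoles{\gtG}$ \emph{and} $\fv{\gtRec{\gtRecVar}{\gtG}} = \emptyset$, so the projection takes the $\stEnd$ branch and the offending binder is never produced. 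It is this stronger invariant --- no \emph{other} free variable can coexist with an unguarded $\stRecVar$ --- rather than the survival of a $\roleP$-involving prefix above each occurrence, that closes the recursion case. With that bookkeeping made precise, your argument is a complete proof, and arguably a more informative one than the paper's.
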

   \begin{proof}
      For a contractive type, the recursive variable always appears under a type constructor.
      The result follows from inspecting the definition of global type projection in the recursive case.
   \end{proof}


   \begin{lemma}[Substitution lemmas]
      \label{lem:substitution}
      The following substitution lemmas hold:
      \begin{enumerate}[leftmargin=*]
         \item If $\qstJudge{}{\stEnv}{\qbEnv\qbEnvComp\qbEnvMap{z}{\tyQBit}}{e}{\tyGround}$
               and $\qstJudge{}{\stEnv}{\qbEnvi}{z'}{\tyQBit}$,
               then $\qstJudge{}{\stEnv}{\qbEnv\qbEnvComp\qbEnvi}{e\subst{z}{z'}}{\tyGround}$.

         \item If $\qstJudge{}{\stEnv\stEnvComp\stEnvMap{x}{\tyGround[1]}}{\qbEnv}{e}{\tyGround[2]}$
               and $\qstJudge{}{\stEnv}{\qbEnvEmpty}{v}{\tyGround[1]}$,
               then $\qstJudge{}{\stEnv}{\qbEnv}{e\subst{x}{v}}{\tyGround[2]}$.

         \item If $\qstJudge{\mpEnv\mpEnvComp\mpEnvMap{\mpX}{(\tilde{\stS},\tilde{\stSi};\stT[1])}}{\stEnv\stEnvComp\widetilde{\stEnvMap{x}{\stS}}}{\qbEnv\qbEnvComp \widetilde{\qbEnvMap{z}{\stSi}}}{\mpP}{\stT[1]}$,
               $\qstJudge{\mpEnv\mpEnvComp\mpEnvMap{\mpX}{(\tilde{\stS},\tilde{\stSi};\stT[1])}}{\stEnv}{\qbEnvi}{\mpQ}{\stT[2]}$,
               and $\qstJudge{}{\stEnv}{\qbEnvEmpty}{d_i}{\stS[i]}$ and $\qstJudge{}{\stEnv}{\qbEnvii[i]}{d'_i}{\stSi[i]}$,
               then $\qstJudge{\mpEnv}{\stEnv}{\qbEnv\qbEnvComp\qbEnvi\qbEnvComp\qbEnvii}{\mpQ\subst{\mpX(\tilde{\mpD},\tilde{\mpDi})}{P\subst{\tx}{\tilde{\mpD}}\subst{\tq}{\tilde{\mpDi}}}}{\stT}$.

         \item If $\qstJudge{\mpEnv}{\stEnv}{\qbEnv\qbEnvComp\qbEnvMap{z}{\tyQBit}}{\mpP}{\stT}$
               and $\qstJudge{}{\stEnv}{\qbEnvi}{z'}{\tyQBit}$,
               then $\qstJudge{\mpEnv}{\stEnv}{\qbEnv\qbEnvComp\qbEnvi}{\mpP\subst{z}{z'}}{\stT}$.

         \item If $\qstJudge{\mpEnv}{\stEnv \stEnvComp \stEnvMap{x}{\tyGround}}{\qbEnv}{\mpP}{\stT}$
               and $\qstJudge{}{\stEnv}{\qbEnvEmpty}{v}{\tyGround}$,
               then $\qstJudge{\mpEnv}{\stEnv}{\qbEnv}{\mpP\subst{x}{v}}{\stT}$.
      \end{enumerate}
   \end{lemma}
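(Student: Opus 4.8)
The plan is to prove all five claims by induction on the relevant typing derivation, carried out in the dependency order (1)--(2), then (4)--(5), then (3): the expression-substitution claims are invoked inside the process-substitution claims, which in turn feed the process-call claim. The governing principle throughout is the bookkeeping of the two contexts of \cref{def:typing-contexts}: the intuitionistic context $\stEnv$ behaves classically and may be freely shared across premises, whereas the linear context $\qbEnv$ must be split exactly, so essentially all of the difficulty lies in tracking into which premise a substituted qubit reference flows, and in keeping the qubit references of the reassembled context pairwise distinct.

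For (1) I would induct on the derivation of $\qstJudge{}{\stEnv}{\qbEnv\qbEnvComp\qbEnvMap{z}{\tyQBit}}{e}{\tyGround}$. Since operators act only on classical data, the linear variable $z$ cannot occur inside an operation, so the only applicable rule is \iruleQVar, which forces $e = z$, $\tyGround = \tyQBit$, and $\qbEnv$ empty; then $e\subst{z}{z'} = z'$ is typed by \iruleQVar from the hypothesis $\qstJudge{}{\stEnv}{\qbEnvi}{z'}{\tyQBit}$, which is exactly the desired conclusion. For (2) the induction is the textbook one: the base case \iruleVar with the variable equal to $x$ yields $v$ and uses the hypothesis on $v$ together with classical weakening to reinstate $\stEnv$; every other leaf is left untouched by the substitution; and the operator case applies the induction hypothesis to each subexpression.

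For (4) and (5) I would induct on the derivation of the process judgment, proceeding rule by rule over \cref{fig:mpst-rules}. In the rules that use the substituted name directly (\iruleMPNil owning $z$ in $\tz$, \iruleMeas measuring $z$, \iruleUnitary acting on $\tz$, or \iruleMPSel sending $z$ as payload), I simply replace $z$ by $z'$ in the relevant slot and reapply the rule, closing the payload subgoal with (1); the analogous classical cases for (5) are closed with (2). In the rules that split $\qbEnv$ (\iruleMPSel, \iruleMPBranch, \iruleIfThen, \iruleMPIfThen, \iruleMPDef, \iruleMPCall) the substituted variable lies in exactly one of the two sub-contexts, so I case-split on which one, rewrite the payload with (1)/(2) when it sits in the payload context and invoke the induction hypothesis on the continuation otherwise, then reassemble the contexts by the concatenation prescribed by the ``concatenation of the different $\qbEnv[i]$'' convention for the conclusion. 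The binder rules (\iruleQBit, \iruleMeas, \iruleMPBranch, and the formals of \iruleMPDef) use the usual freshness convention so the bound name is distinct from $z,z'$ (resp.\ $x,v$), after which the induction hypothesis applies to the body. For (4) the key invariant is that replacing $z$ by $z'$ keeps all qubit references in $\qbEnv\qbEnvComp\qbEnvi$ pairwise distinct, which is precisely what makes every such concatenation a legal linear context.

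For (3) I would induct on the derivation of $\qstJudge{\mpEnv\mpEnvComp\mpEnvMap{\mpX}{(\tilde{\stS},\tilde{\stSi};\stT[1])}}{\stEnv}{\qbEnvi}{\mpQ}{\stT[2]}$. The only interesting case is \iruleMPCall applied to a call $\mpX(\tilde{\mpD},\tilde{\mpDi})$ matching the replaced pattern: the substitution then produces $\mpP\subst{\tx}{\tilde{\mpD}}\subst{\tq}{\tilde{\mpDi}}$, whose typing I obtain from the hypothesis on $\mpP$ by repeatedly applying (5) to discharge each classical formal against its actual $d_i$ and (4) to discharge each quantum formal against its actual $d'_i$, aggregating the linear contexts $\qbEnvii[i]$ of the quantum actuals into $\qbEnvii$. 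All remaining cases are structural: the substitution commutes with the process constructor, and I apply the induction hypothesis to the subprocesses, splitting or sharing the contexts exactly as the corresponding rule does. The main obstacle, here as in (4), is the linear bookkeeping: I must ensure the quantum contexts $\qbEnvii[i]$ carrying the actual qubit arguments are mutually disjoint and disjoint from $\qbEnv,\qbEnvi$, so that the concatenation $\qbEnv\qbEnvComp\qbEnvi\qbEnvComp\qbEnvii$ in the conclusion is well-formed; this disjointness is inherited from the well-formedness of the context in which the call is typed, but must be threaded carefully through the simultaneous substitution of several quantum parameters.
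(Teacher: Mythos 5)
Your proposal is correct and follows essentially the same route as the paper: induction on the relevant typing derivation, with the real work lying in the linear bookkeeping of $\qbEnv$ and in the binder cases (the paper singles out \iruleQBit{} and \iruleMeas{} as the interesting ones, which matches your treatment). Your write-up is considerably more detailed than the paper's one-line proof, and the dependency ordering (1)--(2), (4)--(5), then (3) with induction on the derivation for $\mpQ$ in part (3) is the sensible way to make that one-liner precise.
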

   \begin{proof}
      By induction on the derivation of the first judgement,
      where the interesting cases are the binding forms in the rules~\inferrule{\iruleQBit}~and~\inferrule{\iruleMeas}.
   \end{proof}

\end{toappendix}




First, any global type which is not $\gtEnd$ can perform at least a transition.
\begin{lemmarep}[Progress for global types]\label{lem:progress-global-types}
   If $\gtG \neq \gtEnd$, then $\gtG \,\gtMove$.
\end{lemmarep}
\begin{proof}
   By inspection of the rule~\inferrule{\iruleGtMoveComm},
   and noting that label sets are non-empty.
\end{proof}


\begin{toappendix}

   \begin{lemma}[Projection lemmas]
      \label{lem:projection}
      \leavevmode
      \begin{enumerate}[leftmargin=*]
         \item\label[part]{lem:projection:1}
               If $\gtG \,\gtMove[\lblComm]\, \gtGi$, then $\gtRoles{\gtG} \supseteq \gtRoles{\gtGi}$.
         \item\label[part]{lem:projection:2}
               If $\roleP, \roleQ \in \gtRoles{\gtG}$,
               such that
               ${\gtProj{\gtG}{\roleP} = \stIntSum{\roleQ}{i \in I}{\stChoice{\stLab[i]}{\tyGround[i]} \stSeq \stT[i]}}$ and
               ${\gtProj{\gtG}{\roleQ} = \stExtSum{\roleP}{i \in J}{\stChoice{\stLabi[i]}{\tyGroundi[i]} \stSeq \stTi[i]}}$,
               then $I = J$, and for every $i \in I$, we have
               $\stLab[i] = \stLabi[i]$, $\tyGround[i] = \tyGroundi[i]$,
               $\gtG \gtMove[\gtCommLab{\roleP}{\roleQ}{}:{{\gtLab[i]}({\tyGround[i]})}]\, \gtG[i]$,
               $\gtProj{\gtG[i]}{\roleP} = \stT[i]$, and $\gtProj{\gtG[i]}{\roleQ} = \stTi[i]$.
         \item\label[part]{lem:projection:3}
               If $\gtG \gtMove[\gtCommLab{\roleP}{\roleQ}{}:{{\gtLab[j]}({\tyGround[j]})}]\, \gtG[j]$
               for some $j \in I$, then we have
               ${\gtProj{\gtG}{\roleP} = \stIntSum{\roleQ}{i \in I}{\stChoice{\stLab[i]}{\tyGround[i]} \stSeq \stT[i]}}$,
               ${\gtProj{\gtG}{\roleQ} = \stExtSum{\roleP}{i \in I}{\stChoice{\stLab[i]}{\tyGround[i]} \stSeq \stTi[i]}}$,
               and ${\gtProj{\gtG[j]}{\roleR} = \gtProj{\gtG}{\roleR}}$, for ${\roleR \not\in \set{\roleP, \roleQ}}$.
      \end{enumerate}
   \end{lemma}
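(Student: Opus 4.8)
The plan is to prove the three parts by structural induction on the global type $\gtG$, exploiting the definition of projection (\cref{def:global-proj}) and the two transition rules \iruleGtMoveComm\ and \iruleGtMoveCtx\ (\cref{fig:gtype:red-rules}). The three parts are tightly interrelated: parts~2 and~3 are essentially converse directions relating local-type shapes to available transitions, and both hinge on understanding how projection interacts with the merge operation on the branches of a communication that does \emph{not} involve the projected role.

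For \cref{lem:projection:1}, I would case-split on the rule deriving $\gtG \gtMove[\lblComm] \gtGi$. If \iruleGtMoveComm\ fires, then $\gtG = \gtCommSmall{\roleP}{\roleQ}{i\in I}{\gtLab[i]}{\tyGround[i]}{\gtGi[i]}$ and $\gtGi = \gtGi[j]$ for some $j$, so $\gtRoles{\gtGi[j]} \subseteq \set{\roleP,\roleQ}\cup\bigcup_i\gtRoles{\gtGi[i]} = \gtRoles{\gtG}$ directly from the definition of $\gtRoles{\cdot}$. If \iruleGtMoveCtx\ fires, the head communication is preserved and each continuation steps to a $\gtGii[i]$; here I apply the induction hypothesis to each branch to get $\gtRoles{\gtGi[i]} \supseteq \gtRoles{\gtGii[i]}$, and conclude since the head roles $\set{\roleP,\roleQ}$ are unchanged.

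For \cref{lem:projection:2}, the hypotheses fix the shape of $\gtProj{\gtG}{\roleP}$ as an internal choice toward $\roleQ$ and $\gtProj{\gtG}{\roleQ}$ as an external choice from $\roleP$. I would induct on $\gtG$. When $\gtG$'s head communication is exactly between $\roleP$ and $\roleQ$, the projection clauses for the sender and receiver cases of \cref{def:global-proj} read off $I=J$, matching labels $\stLab[i]=\stLabi[i]$ and payload types $\tyGround[i]=\tyGroundi[i]$ immediately, and \iruleGtMoveComm\ supplies the transition with $\gtProj{\gtG[i]}{\roleP}=\stT[i]$, $\gtProj{\gtG[i]}{\roleQ}=\stTi[i]$. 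When the head communication is between some other pair $\roleR,\roleS$ disjoint from $\set{\roleP,\roleQ}$, both $\gtProj{\gtG}{\roleP}$ and $\gtProj{\gtG}{\roleQ}$ are obtained by merging the projections of the continuations; the key step is to argue that since the merged result is an internal (resp.\ external) choice toward $\roleQ$ (resp.\ from $\roleP$), each branch's projection already has that shape, so the induction hypothesis applies branchwise, and the transitions lift through \iruleGtMoveCtx. \Cref{lem:projection:3} is the symmetric statement and follows by the same induction, reading the projection definition in the forward direction once the transition is known to come from \iruleGtMoveComm\ (head case) or \iruleGtMoveCtx\ (context case), using part~1 to control the role sets of the residual.

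The main obstacle I anticipate is the merge case in parts~2 and~3: I must show that merging preserves and reflects the internal/external-choice shape, i.e.\ that if $\stMerge{i\in I}{\gtProj{\gtG[i]}{\roleP}}$ is a well-defined internal choice toward $\roleQ$, then every $\gtProj{\gtG[i]}{\roleP}$ is itself such a choice with compatible labels and payloads. This requires a careful reading of the merge clauses in \cref{def:local-type-merge} — in particular that internal choices merge only when their branches coincide, whereas external choices merge by taking the union of label sets — together with well-formedness of $\gtG$ (projectability onto all roles), which guarantees these merges are defined. I would isolate this as the crux and handle it explicitly, treating the recursion and $\gtEnd$ cases as routine appeals to the contractivity assumption and \cref{lem:contractive}.
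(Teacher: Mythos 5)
Your plan is correct and takes essentially the same route as the paper, whose entire proof is the one-liner ``by analysing the definition of global type projection'': a structural induction on the global type, case analysis on which projection clause applies, and a careful treatment of the full-merge case. You in fact go further than the paper by explicitly isolating the genuine crux---that merging reflects the internal/external-choice shape branchwise (internal choices merge only when identical, external choices by label-set union), which is needed to apply the induction hypothesis and lift transitions through the context rule.
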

   \begin{proof}
      By analysing the definition of global type projection.
   \end{proof}

\end{toappendix}

\noindent
Subject reduction, stated below, ensures that transitions preserve typability for systems with no free variables (which may however have references to the qubit register).
We write $\exists\,\qbEnv\wk\qbEnvi$ to mean that
there exists a context $\qbEnvi$ such that $\qbEnv$ is a weakening of $\qbEnvi$.

\begin{theoremrep}[Subject Reduction]
   \label{thm:subject-reduction}
   \leavevmode
   \begin{enumerate}[leftmargin=*]
      \item If $\qstJudge{}{\stEnvEmpty}{\qbEnv[q]}{\sysM}{\gtG}$ and $\sysM \lts{\roleP} \sysMi$,
            then $\exists\qbEnvi[q]$ such that
            $\qstJudge{}{\stEnvEmpty}{\qbEnvi[q]}{\sysMi}{\gtG}$.

      \item If $\qstJudge{}{\stEnvEmpty}{\qbEnv[q]}{\sysM}{\gtG}$ and $\sysM \lts{\trans{\roleP}{\roleQ}{\mpLab(v)}} \sysMi$,
            then $\exists\,\qbEnv[q]\wk\qbEnvi[q],\tyGround,\gtGi$
            such that
            $\qstJudge{}{\stEnvEmpty}{\qbEnvi[q]}{v}{\tyGround}$,
            ${\gtG \,\gtMove[\trans{\roleP}{\roleQ}{\mpLab(\tyGround)}]\, \gtGi}$,
            and $\qstJudge{}{\stEnvEmpty}{\qbEnv[q]}{\sysMi}{\gtGi}$.
   \end{enumerate}
\end{theoremrep}
\begin{proof}
   By case analysis on the transition label.
   Part 1:
   By inverting the $\roleP$ transition derivation, then inversion on the typing derivation,
   and substitution~\lcnamecrefs{lem:substitution}~(\cref{lem:substitution}).
   Note that this lemma holds for closed terms (no free variables except for qubit references)
   because of how the $\lts{}{}$ relation is defined, tracking the qubit environment in the quantum register.
   Part 2:
   This is mostly standard following~\cite[theorem 6.10]{CASTELLANI2023100844},
   and using the projection lemmas~(\cref{lem:projection}).
   Note that if $\mpV$ is a reference to a qubit register, then it needs to be typed in $\qbEnvi[q]$,
   and that $\roleP$ sends it, while $\roleQ$ receives it, hence it still belongs to $\sysMi$,
   making $\qbEnv[q]$ stay the same.
\end{proof}

Note that the environment $\qbEnv[q]$ never changes during communication
steps: ownership can move from one process to the other, but
$\tyQBit$s are neither created nor destroyed.
They can only be created and destroyed in computational steps, using, respectively, new qubit generation and measurement.

Session fidelity shows that transitions of global types and of systems
do agree. Notice however that global types abstract away from non-communication
steps, which are instead present in systems.
\begin{theoremrep}[Session Fidelity]
   \label{thm:session-fidelity}
   \leavevmode
   \begin{enumerate}[leftmargin=*]
      \item If $\qstJudge{}{\stEnvEmpty}{\qbEnv[q]}{\sysM}{\gtG}$ and
            ${\gtG \,\gtMove[\gtCommLab{\roleP}{\roleQ}{}:{{\gtLab}({\tyGround})}]\, \gtGi}$,
            then $\exists\,\qbEnvi[q],\qbEnvii[q],\sysMi,\mpV$,
            such that $\qstJudge{}{\stEnvEmpty}{\qbEnvi[q]}{\mpV}{\tyGround}$,
            ${\sysM \Lts{\trans{\roleP}{\roleQ}{\mpLab(\mpV)}} \sysMi}$,
            and $\qstJudge{}{\stEnvEmpty}{\qbEnvii[q]}{\sysMi}{\gtGi}$.

      \item If $\qstJudge{}{\stEnvEmpty}{\qbEnv[q]}{\sysM}{\gtG}$,
            ${\sysM \Lts{\trans{\roleP}{\roleQ}{\mpLab(\mpV)}} \sysMi}$,
            and ${\gtG \,\gtMove[\trans{\roleP}{\roleQ}{\mpLab(\tyGround)}]\, \gtGi}$,
            then $\exists\,\qbEnvi[q],\qbEnvii[q]$,
            such that $\qstJudge{}{\stEnvEmpty}{\qbEnvi[q]}{\mpV}{\tyGround}$,
            and $\qstJudge{}{\stEnvEmpty}{\qbEnvii[q]}{\sysMi}{\gtGi}$.
   \end{enumerate}
\end{theoremrep}
\begin{proof}
   Part 1:
   We do an inversion on the~\inferrule{\iruleGtMoveComm} rule, hence there exists $j \in I$ such that
   \( \beta = \gtCommLab{\roleP}{\roleQ}{}:{{\gtLab[j]}({\tyGround[j]})} \),
   and the type $\gtG$ becomes
   \( \gtComm{\roleP}{\roleQ}{i \in I}{\gtLab[i]}{\tyGround[i]}{\gtG[i]} \),
   which types $\sysM$ in $\qbEnv[q]$.
   By~\cref{lem:projection}~part 3, we get the projections of $\gtG$.
   By induction on the typing derivation of $\roleP$ and $\roleQ$ we get that they take complementary communication actions, possibly preceded by internal steps.
   We can lift all the steps to $\sysM$ using rule~\ruleRedCong, preceded by rule~\ruleRedComm~for the communication step, ontaining $\sysMi$.
   By~\cref{thm:subject-reduction} there exists $\qbEnvii[q]$ such that $\qstJudge{}{\stEnvEmpty}{\qbEnvii[q]}{\sysMi}{\gtGi}$.
   Part 2: Similarly, and by using~\cref{thm:subject-reduction} on the reduction of $\sysM$.
\end{proof}

\noindent
We can now show that well-typed systems have progress and never get stuck.
\begin{theoremrep}[Progress]\label{thm:progress}
   If $\qstJudge{}{\stEnvEmpty}{\qbEnv[q]}{\sysM}{\gtG}$, then either $\gtG = \gtEnd$, or $\sysM \lts{}{}$.
\end{theoremrep}
\begin{proof}
   By~\cref{thm:session-fidelity} and~\cref{lem:progress-global-types}.
\end{proof}

\begin{definition}\label{def:stuck}
   A system $\sysM$ is stuck if it is not of the form $\prod_{i \in I} \mpProc{\roleP[i]}{\mpNil[\tz_i]}$,
   and $\sysM \not\lts{}{}$.
   A system $\sysM$ gets stuck, written $\stuck{\sysM}$, if it reduces to a stuck system.
\end{definition}

\begin{theoremrep}[Type Safety]\label{thm:safety}
   If $\qstJudge{}{\stEnvEmpty}{\qbEnv[q]}{\sysM}{\gtG}$, then $\stuck{\sysM}$ does not hold.
\end{theoremrep}
\begin{proof}
   By~\cref{thm:subject-reduction,thm:progress}.
\end{proof}

Given a well-typed process $\qstJudge{}{\stEnv}{\qbEnv}{\mpP}{\stT}$,
the free qubits of $\mpP$, written $\fq{\mpP}$ are exactly the qubit variables in $\qbEnv$.
The typing rule~\inferrule{\iruleMPSystem} ensures unique ownership of qubits in a multiparty system.
Alternatively, given a well-typed multiparty system
$\qstJudge{}{\stEnv}{\qbEnv}{\prod_{\roleP[i] \in \gtRoles{\gtG}} \mpProc{\roleP[i]}{\mpP[i]}}{\gtG}$,
$\fq{\mpP[i]}$ can be computed as $\qbEnv \cap \fv{\mpP[i]}$
(see~\cref{def:free-variables}
\ifmain
in~\cref{apx:def:free-variables}
\fi
),
which is exactly $\qbEnv[i]$, by inverting the typing rule~\inferrule{\iruleMPSystem}.
\begin{toappendix}
   \label{apx:def:free-variables}
   \begin{definition}[Free variables]
      \label{def:free-variables}
      Function $\fv{\cdot}$ computes the free variables (and references) on systems $\sysone$ and processes $\mpP$. It is defined recursively as follows:
      \[
         \begin{array}{rcl}
            \fv{\prod_{i \in I} \mpProc{\roleP[i]}{\mpP[i]}}             & = & \bigcup_i \fv{\mpP[i]}                        \\[0.5em]
            \fv{\mpSel{}{\roleP}{\mpLab}{e}{\mpP}}                       & = & \fv{e} \cup \fv{\mpP}                         \\[0.5em]
            \fv{\mpBranch{}{\roleP}{i \in I}{\mpLab[i]}{x_i}{\mpP[i]}{}} & = & \bigcup_{i \in I}(\setsub{\fv{\mpP[i]}}{x_i}) \\[0.5em]
            \fv{\mpDef{\mpX}{\tx}{\mpP}{\mpQ}}                           & = & \setsub{\fv{\mpP}}{\tx} \cup \fv{\mpQ}        \\[0.5em]
            \fv{\mpCallSmall{\mpX}{\te}}                                 & = & \bigcup_{e \in \te} \fv{e}                    \\[0.5em]
            \fv{\expMeasure{x}{z}.\mpP}                                  & = & \setsub{(\sets{z} \cup \fv{\mpP})}{x}                       \\[0.5em]
            \fv{\expNewQBit[x].\mpP}                                     & = & \setsub{\fv{\mpP}}{x}                         \\[0.5em]
            \fv{\expUnitary[]{\tz}.\mpP}                                 & = & \tz \cup \fv{P}                               \\[0.5em]
            \fv{\mpIf{e}{\mpP}{\mpQ}}                                    & = & \fv{e} \cup \fv{\mpP} \cup \fv{\mpQ}          \\[0.5em]
            \fv{\mpNil[\tz]}                                             & = & \tz                                           \\[0.5em]
         \end{array}
      \]
   \end{definition}
\end{toappendix}
We can rephrase in our setting the result in \cite[Theorem 2]{GayN05}.

\begin{theoremrep}[Unique Ownership of Qubits]\label{thm:unique}
   If $\qstJudge{}{\stEnv}{\qbEnv}{\prod_{i \in I} \mpProc{\roleP[i]}{\mpP[i]}}{\gtG}$,
   then for any $j, k \in I$ such that $j \neq k$, we have $\fq{P_j} \cap \fq{P_k} = \emptyset$.
\end{theoremrep}
\begin{proof}
   This follows by observing the $\inferrule{\iruleMPSystem}$ rule,
   which uses the concatenation of each $\qbEnv[i]$ in $\qbEnv$ in the conclusion.
\end{proof}
\noindent
By safety, we can extend the transitions to typed systems as follows.
\begin{definition}[Typed Multiparty System Transitions]
   We say that $\qstJudge{}{\stEnvEmpty}{\qbEnv[q]}{\sysM}{\gtG} \Lts{\alpha} \qstJudge{}{\stEnvEmpty}{\qbEnvi[q]}{\sysMi}{\gtGi}$ if
   $\sysM \Lts{\lblComm} \sysMi$ and $\gtG \lts{\lblCommi} \gtGi$ and $\qstJudge{}{\stEnvEmpty}{\qbEnvi[q]}{\sysMi}{\gtGi}$.
\end{definition}
\noindent
We can thus show that the $\tyQBit$s needed at runtime are only the ones prescribed by the typing judgment.

\begin{toappendix}

   \begin{lemma}[Safety of Qubit Registers]
      \label{lem:qubit-monotonicity}
      If $\qstJudge{}{\stEnvEmpty}{\qbEnv[q]}{\sysM}{\gtG}$ and $\qstJudge{}{\stEnvEmpty}{\qbEnvi[q]}{\sysMi}{\gtGi}$,
      and $\pair{\qbregone}{\sysM} \lts{\lblComm}{} \pair{\qbregtwo}{\sysMi}$,
      then $\ps{\qregone {\setminus} \fq{\sysM}} = \ps{\qregtwo {\setminus} \fq{\sysMi}}$.
   \end{lemma}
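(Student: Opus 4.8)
The plan is to proceed by induction on the derivation of the transition $\pair{\qbregone}{\sysM} \lts{\lblComm}{} \pair{\qbregtwo}{\sysMi}$, with a case analysis on the last rule applied; the only genuinely inductive case is \ruleRedCong, and every other rule of \cref{fig:concredsem} is a base case. The guiding principle is that, because both $\sysM$ and $\sysMi$ are typed under an empty classical context, their free qubits (in the sense of \cref{def:free-variables}) coincide with the qubit references recorded in the linear contexts $\qbEnv[q]$ and $\qbEnvi[q]$, which by unique ownership (\cref{thm:unique}) decompose as disjoint unions over the participants. Thus the claim reduces to tracking, rule by rule, how the register $\qregone$ and the owned set $\fq{\sysM}$ evolve, and to checking that the two always move in lockstep, so that their set difference is invariant.

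First I would dispatch the register-preserving rules, where $\qregtwo = \qregone$ and it suffices to show $\fq{\sysM} = \fq{\sysMi}$. For \ruleRedComm this follows from linearity: by rule \iruleMPSel the sent value leaves the sender's linear context and, if it is a qubit reference, by rule \iruleMPBranch it enters the receiver's context, so the total owned set is unchanged (and for classical payloads nothing moves, since classical expressions contain no qubit references). For \ruleRedUnitary, rule \iruleUnitary types the continuation under the very same linear context containing $\tz$, so the argument qubits stay owned and $\fq{\sysM} = \fq{\sysMi}$. For \ruleRedIfTrue and \ruleRedIfFalse, rules \iruleIfThen and \iruleMPIfThen type both branches under the same linear context and the guard is classical, so again $\fq{\sysM} = \fq{\sysMi}$. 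For \ruleRedDef the register is untouched and the free qubits are preserved by the quantum substitution lemmas (\cref{lem:substitution}): unfolding the definition and substituting the evaluated quantum arguments moves each reference from the call site into the body without duplicating or dropping any.

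Next come the two register-changing rules. For \ruleRedNew we have $\qregtwo = \qregone \cup \{x\}$ with $x \notin \qregone$ fresh; by rule \iruleQBit the continuation gains ownership of $x$, so $\fq{\sysMi} = \fq{\sysM} \cup \{x\}$, and since $x$ is bound in the redex we may take $x \notin \fq{\sysM}$, whence $\qregtwo \setminus \fq{\sysMi} = (\qregone \cup \{x\}) \setminus (\fq{\sysM} \cup \{x\}) = \qregone \setminus \fq{\sysM}$. For \ruleRedMeasZero and \ruleRedMeasOne we have $\qregtwo = \qregone \setminus \{z\}$; by rule \iruleMeas the measured qubit $z$ is owned by the measuring process but not free in its continuation, so $\fq{\sysMi} = \fq{\sysM} \setminus \{z\}$ with $z \in \fq{\sysM}$ and, by the rule's premise, $z \in \qregone$, whence $\qregtwo \setminus \fq{\sysMi} = (\qregone \setminus \{z\}) \setminus (\fq{\sysM} \setminus \{z\}) = \qregone \setminus \fq{\sysM}$. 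In the inductive case \ruleRedCong the reduction occurs inside one parallel component while a sibling $\mpP$ is carried along unchanged; writing the reducing component and its residual as $\sysone$ and $\systwo$, the induction hypothesis gives $\qregone \setminus \fq{\sysone} = \qregtwo \setminus \fq{\systwo}$, and using $\fq{\para{\sysone}{\mpP}} = \fq{\sysone} \cup \fq{\mpP}$ together with the identity $(A \setminus B) \setminus C = A \setminus (B \cup C)$ (with $C = \fq{\mpP}$) we conclude $\qregone \setminus \fq{\para{\sysone}{\mpP}} = \qregtwo \setminus \fq{\para{\systwo}{\mpP}}$.

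I expect the main obstacle to be the \ruleRedDef case: unlike the others it requires a free-qubit analogue of the process-level substitution lemma, so one must verify that unfolding a (possibly recursive) definition and substituting the evaluated quantum parameters preserves the owned-qubit set exactly, the subtle point being that a quantum parameter occurring in the call is replaced linearly by its actual reference, with neither aliasing nor loss. By contrast, the register-changing cases, although both sides genuinely move, reduce to the two displayed set identities above.
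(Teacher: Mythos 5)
Your proposal is correct and follows essentially the same route as the paper, which likewise proceeds by case analysis on the reduction rules and uses the typing assumptions to track how the register and the owned qubits change in each case (the paper's proof is just a one-line sketch of this). Your fleshed-out version — in particular the explicit treatment of the two register-changing rules and the congruence case via the set identity — fills in exactly the details the paper leaves implicit.
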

   \begin{proof}
      By case analysis on the reduction relation,
      and using the assumptions on the quantum registers for each case, in particular the $\lblComp$ transitions.
   \end{proof}

\end{toappendix}

\begin{theoremrep}[Qubit Safety]
   \label{thm:qubit-safety}
   If $\qstJudge{}{\stEnvEmpty}{\qbEnv[q]}{\sysM}{\gtG} \Lts{\alpha} \qstJudge{}{\stEnvEmpty}{\qbEnvi[q]}{\sysMi}{\gtGi}$,
   then $\pair{\hilb{\qu{\qbEnv[q]}}}{\sysM} \Lts{\alpha} \pair{\hilb{\qu{\qbEnvi[q]}}}{\sysMi}$,
   where $\qu{\qbEnv} = \set{q | \qbEnvMap{q}{\tyQBit} \in \qbEnv}$.
\end{theoremrep}
\begin{proof}
   By~\cref{lem:qubit-monotonicity,thm:subject-reduction,thm:session-fidelity}.
\end{proof}


\section{Case Studies}
\label{sec:examples}

We now put the theory developed in the previous sections to work on
some sample quantum protocols from the literature, to complement
Quantum Teleportation described in the previous sections. All the
protocols mix communications of classical values and communications of
references to qubits.

\begin{example}[Quantum secret sharing {\cite[Fig.~21]{gay18}}]\label{ex:secret}
    The problem of secret sharing~\cite{secretSharing} involves an agent $\roleAlice$ sending
    a $\tyQBit$ to two agents $\roleBob$ and $\roleCharlie$, one of whom is dishonest. $\roleAlice$ does not know which is the dishonest agent, so she encodes the $\tyQBit$ in a way that requires $\roleBob$ and $\roleCharlie$ to collaborate to retrieve it. In order to do this, they need to have each a $\tyQBit$ which is part of a maximally entangled triple, provided by $\roleSource$.
    As in the running example, we introduce $\roleEnv$ to model input/output operations.
    \begin{multline*}\trans{\roleEnv}{\roleAlice}{\tyQBit} \gtSeq \trans{\roleSource}{\roleAlice}{\tyQBit} \gtSeq \trans{\roleSource}{\roleBob}{\tyQBit} \gtSeq\\ \trans{\roleSource}{\roleCharlie}{\tyQBit} \gtSeq \trans{\roleAlice}{\roleCharlie}{\tyBit^2} \gtSeq\\ \trans{\roleBob}{\roleCharlie}{\tyBit} \gtSeq \trans{\roleCharlie}{\roleEnv}{\tyQBit}
    \end{multline*}
    We can now write an actual system implementing this multiparty global type:
    \[
        \sysM = \mpProc{\roleAlice}{\mpP[A]} \pp \mpProc{\roleBob}{\mpP[B]} \pp \mpProc{\roleCharlie}{\mpP[C]} \pp \mpProc{\roleSource}{\mpP[S]} \pp \mpProc{\roleEnv}{\mpP[E]}
    \]
    where:
    \[
        \begin{array}{rcl}
            \mpP[A] & = & \mpBranchSingle{}{\roleEnv}{}{x}{
                \mpBranchSingle{}{\roleSource}{}{a}{
                    \!
                    \begin{array}[t]{l}
                        \expUnitary[CNot]{x,a}
                        \mpSeq
                        \expUnitary[H]{x}
                        \mpSeq
                        \expMeasure{m}{x}
                        \mpSeq \\
                        \expMeasure{n}{a}
                        \mpSeq
                        \mpSel{}{\roleCharlie}{}{m,n}{\mpNil}
                    \end{array}
                }
            }                                                                            \\
            \mpP[B] & = & \mpBranchSingle{}{\roleSource}{}{b}{
                \expUnitary[H]{b}
                \mpSeq
                \expMeasure{o}{b}
                \mpSeq
                \mpSel{}{\roleCharlie}{}{o}{\mpNil}
            }                                                                            \\
            \mpP[C] & = & \begin{array}[t]{l}
                              \mpBranchSingle{}{\roleSource}{}{c}{
                                  \mpBranchSingle{}{\roleAlice}{}{m,n}{
                                      \mpBranchSingle{}{\roleBob}{}{o}{
                                          (\mpJustIf{o}{
                              \expUnitary[\sigma_2]{c}}) \mpSeq}}} \\ \qquad
                              (\mpJustIf{m}{
                                  \expUnitary[\sigma_1]{c}}) \mpSeq
                              (\mpJustIf{n}{
                                  \expUnitary[\sigma_2]{c}}) \mpSeq
                              \mpSel{}{\roleEnv}{}{c}{\mpNil}
                          \end{array} \\
            \mpP[S] & = & \begin{array}[t]{l}
                              \expNewQBit[a,b,c] \mpSeq
                              \expUnitary[H]{a} \mpSeq
                              \expUnitary[CNOT]{a,b} \mpSeq
                              \expUnitary[H]{a} \mpSeq
                              \expUnitary[CNOT]{b,c} \mpSeq \\\qquad
                              \mpSel{}{\roleAlice}{}{a}{
                                  \mpSel{}{\roleBob}{}{b}{
                                      \mpSel{}{\roleCharlie}{}{c}{\mpNil}
                                  }
                              }
                          \end{array}                          \\
            \mpP[E] & = & \mpSel{}{\roleAlice}{}{x}{
                \mpBranchSingle{}{\roleCharlie}{}{c}{\mpNil[c]}
            }
        \end{array}
    \]
    We sometimes use conditional $(\mpJustIf{e}{P}).Q$ as
    prefix. This can be seen as a shortcut for $\ifff
        e~\then P\subst{\mpNil}{Q}~\els Q$ or easily added to the theory.
    We also consider tuples as data types.

    We refer to~\cite{gay18} for a detailed description of the protocol, while we prefer to contrast the clarity of our notation against theirs. For instance, the process for $\roleCharlie$ in~\cite[Fig.~21]{gay18} is as follows:
    \[\gayIn{f}{c}.\gayIn{t}{m}.\gayIn{w}{n}.\gayIn{u}{o}.(\mpJustIf{o}{\expUnitary[\sigma_2]{c}}).(\mpJustIf{m}{\expUnitary[\sigma_1]{c}}).(\mpJustIf{n}{\expUnitary[\sigma_2]{c}}).\gayOutput{c}.\mpNil\]
    While there is a direct correspondence between the two terms, we argue that the use of participant names instead of channel names makes our notation easier to understand. Also, local and global types provide abstract descriptions with no correspondence in their framework.

    The typing of the participants is very close to the one of the running example, hence we do not detail it.\finex
\end{example}

\begin{example}[Quantum bit-commitment {\cite[Fig.~6]{GayN05}}]\label{ex:qbc}
    The quantum bit-commitment problem~\cite{bennett2014quantum}
    requires $\roleAlice$ to choose a $\tyBit$ value which $\roleBob$
    then attempts to guess. The key issue is that $\roleAlice$ must
    evaluate $\roleBob$’s guess with respect to her original choice of
    $\tyBit$, without changing her mind.  Similarly, $\roleBob$ must not ﬁnd out
    $\roleAlice$’s choice before making his guess.

    The code in \cite[Fig.~6]{GayN05} would roughly correspond to the
    QMPST below:
    \begin{eqnarray*}
        \trans{\roleAlice}{\roleBob}{(\tyInt)} \gtSeq \gtFmt{\mu{t}} \gtSeq \hspace{-3cm}\\
        &&(\trans{\roleAlice}{\roleBob}{(\tyQBit)} \gtSeq \trans{\roleRandom}{\roleBob}{(\tyBit)} \gtSeq \gtRecVar) +\\  &&(\trans{\roleRandom}{\roleBob}{(\tyBit)} \gtSeq \trans{\roleBob}{\roleAlice}{(\tyBit)} \gtSeq\\ && \qquad \trans{\roleAlice}{\roleBob}{(\tyBit)} \gtSeq \trans{\roleAlice}{\roleBob}{(\tyBitList)})
    \end{eqnarray*}
    The key idea is that $\roleAlice$ in order to encode a $\tyBit$ $x$ chooses a sequence of $\tyBit$s $xs$, and encodes each of them into a $\tyQBit$ using the standard basis if $x$ is $0$, and the diagonal basis otherwise. $\roleBob$ measures each $\tyQBit$ using a random basis (he gets random values from a random number generator $\roleRandom$). If the random basis matches $\roleAlice$ base, it will get the corresponding bit from $xs$, otherwise a random bit (evenly distributed). After having done this, $\roleBob$ sends to $\roleAlice$ his guess $g$, and $\roleAlice$ checks whether $\roleBob$ guessed correctly. In order to certify her answer, she sends $xs$ to $\roleBob$.

    The global type above will not capture the fact that the number of iterations of the recursive behavior depends on the integer value sent by $\roleAlice$ to $\roleBob$ in the first interaction, namely the length of $xs$. To capture this dependency one would need to introduce dependent types, which are outside the scope of the present work (we will come back to this point in the final discussion, in~\cref{sec:concl}).\

    Additionally, $\roleRandom$ does not know when the protocol should
    end ($\roleAlice$ and $\roleBob$ both know the desired number of
    iterations), indeed actually the code for $\roleRandom$
    in \cite[Fig.~6]{GayN05} is just an infinite loop. Since this may
    be a problem ($\roleRandom$ is not of the form $\mpNil[\tilde q]$ for some $\tilde q$ when the protocol
    ends, falsifying type safety (~\cref{thm:safety})), most
    MPST approaches forbid this behavior (we will come back to
    this point as well in the final discussion, in~\cref{sec:concl}).

    We can instead formalize the protocol in a different form, that does not require
    dependent types, by having $\roleAlice$ telling $\roleBob$ using
    distinct labels $\lbl{data}$ and $\lbl{stop}$ whether there are further data or not. We also
    solve the issue of $\roleRandom$ by having $\roleBob$ telling him
    when the protocol is about to finish.  This results in the global
    type in~\cref{fig:globalQBC}.
    \begin{figure*}[t]
        \centerline{
            \scalebox{\scalef}{$\displaystyle
                    \gtRec{\gtRecVar}{ \left\{
                        \begin{array}{l}
                            \gtCommSingle{\roleRandom}{\roleBob}{}{\tyBit}{} \gtSeq \gtCommBegin{\roleAlice}{\roleBob}
                            \left\{
                            \begin{array}{l}
                                \lbl{data}(\tyQBit) \gtSeq \gtCommSingle{\roleBob}{\roleRandom}{\lbl{loop}}{}{} \gtSeq \gtRecVar, \\
                                \lbl{stop} \gtSeq \gtCommSingle{\roleBob}{\roleRandom}{\lbl{stop}}{}{} \gtSeq
                                \gtCommSingle{\roleBob}{\roleAlice}{}{\tyBit}{} \gtSeq                                            \\
                                \qquad
                                \gtCommSingle{\roleAlice}{\roleBob}{}{\tyBit,\tyBitList}{} \gtSeq
                                \gtCommSingle{\roleBob}{\roleEnv}{\left\{
                                    \begin{array}{l}
                                        \lbl{verified}, \\
                                        \lbl{notVerified}
                                    \end{array}
                                    \right\}
                                }{}{}
                            \end{array}
                            \right\}
                            \\
                        \end{array}
                        \right\} }
                $}}
        \caption{Global type for Quantum Bit Commitment}\label{fig:globalQBC}
    \end{figure*}
    We can then write the corresponding system (participants are defined in~\cref{fig:qbcProcesses}):
    \[
        \sysM = \mpProc{\roleAlice}{\mpP[A]} \pp \mpProc{\roleBob}{\mpP[B]} \pp \mpProc{\roleRandom}{\mpP[R]}
    \]
    \begin{figure*}[t]
        \centerline{
            \scalebox{\scalef}{$\displaystyle
                    \begin{array}{rcl}
                        P_A & = & \mpFmt{\mathbf{def}\;X(x,xs,ys)} =                                                                        \\
                            &   & \qquad
                        \ifff xs \neq []~\then \expNewQBit[x_q]. (\mpJustIf{\hd{xs}=1}{\sigma_1(x_q)}).                                     \\
                            &   & \qquad \qquad
                        (\mpJustIf{x=1}{H(x_q)}). \mpSel{}{\roleBob}{\lbl{data}}{x_q}{\mpCall{X}{x,\tl{xs},ys}}                             \\
                            &   & \qquad \els                                                                                               
                        \mpSel{}{\roleBob}{\lbl{stop}}{}{ \mpBranch{}{\roleBob}{}{}{g}{ \mpSel{}{\roleBob}{}{(x,ys)}{\mpNil}}{}}            \\
                            &   & \mathbf{in}~\mpCall{X}{x,xs,xs}                                                                           \\

                        P_B & = & \mpFmt{\mathbf{def}\;X_V(ms,vs,a)} =                                                                      \\
                            &   & \qquad \ifff ms=[]~\then \mpSel{}{\roleEnv}{\lbl{verified}}{}{\mpNil}                                     \\
                            &   & \qquad \els \ifff \fst{\hd{ms}}\, !\!= a \lor \snd{\hd{ms}}=\hd{vs}~\then \mpCall{X_V}{\tl{ms},\tl{vs},a} \\
                            &   & \qquad \phantom{\els} \els \mpSel{}{\roleEnv}{\lbl{notVerified}}{}{\mpNil}                                \\
                            &   & \mathbf{in}~\mpFmt{\mathbf{def}\;X(ms)} =                                                                 \\
                            &   & \qquad \mpBranchRaw{}{\roleRandom}{y}.\mpBranchRaw{}{\roleAlice}{
                            \left\{
                            \begin{array}{l}
                                \lbl{data}(x). \mpSel{}{\roleRandom}{\lbl{loop}}{}{}. (\mpJustIf{y=1}{\expUnitary[H]{x})}). \measure{z}{x}. \mpCall{X}{ms@(y,z)}          \\
                                \lbl{stop}. \mpSel{}{\roleRandom}{\lbl{stop}}{}{}. \mpSel{}{\roleAlice}{}{g}{}. \mpBranchRaw{}{\roleAlice}{(a,vs)}. \mpCall{X_V}{ms,vs,a} \\
                            \end{array}
                        \right\}}                                                                                                           \\
                            &   & \mathbf{in}~\mpCall{X}{[]}                                                                                \\
                        P_R & = & \mpFmt{\mathbf{def}\;X} =
                        \expNewQBit[x_q]. \expUnitary[H]{x_q}. \measure{r}{x_q}. \mpSel{}{\roleBob}{}{r}{ \mpBranchRaw{}{\roleBob}{
                                \left\{
                                \begin{array}{l}
                                    \lbl{loop} \mpSeq \mpCall{\mpX}{} \\
                                    \lbl{stop} \mpSeq \mpNil
                                \end{array}
                                \right\}
                            }
                            ~\mathbf{in}~
                            \mpCall{\mpX}{}}
                    \end{array}
                $}}
        \caption{Processes for Quantum Bit-Commitment}\label{fig:qbcProcesses}
    \end{figure*}
    \noindent
    Using the typing rules we can derive:
    \(
        \stJudge{}{}{\sysM}{\gtG}
    \),
    provided that $x$ and $xs$ in the initial invocation of $\roleAlice$ recursive behavior are replaced by concrete values (to have a closed term).
    %
    As first step, we project the global type on the different participants, obtaining the result in~\cref{fig:qbcProjections}.
    \begin{figure*}[t]
        \centerline{
            \scalebox{\scalef}{$\displaystyle
                    \begin{array}{rcl}
                        \gtProj{\gtG}{\roleAlice}  & = & \stRec{\stRecVar}{\stIntSumSing{\roleBob}{}{\stChoice{\lbl{data}}{\tyQBit} \stSeq \stRecVar, \stChoice{\lbl{stop}}{} \stSeq \stExtSumSing{\roleBob}{}{\stChoice{}{\tyBit} \stSeq \stIntSumSing{\roleBob}{}{\stChoice{}{\tyBit,\tyBitList} \stSeq \stEnd} }}} \\
                        \gtProj{\gtG}{\roleBob}    & = & \stFmt{\mu{\stRecVar}.\stExtSumSing{\roleRandom}{}{\stChoice{}{\tyBit} \stSeq \stExtSum{\roleAlice}{}{
                                    \begin{array}{l}
                                        \stChoice{\lbl{data}}{\tyQBit} \stSeq \stIntSumSing{\roleRandom}{}{\stChoice{}{\lbl{loop}} \stSeq \stRecVar} \\
                                        \stChoice{\lbl{stop}}{} \stSeq \stIntSumSing{\roleRandom}{}{\stChoice{}{\lbl{stop}} \stSeq \stIntSumSing{\roleAlice}{}{\stChoice{}{\tyBit} \stSeq \stExtSumSing{\roleAlice}{}{\stChoice{}{\tyBit,\tyBitList} \stSeq \stIntSum{\roleEnv}{}{
                                                        \begin{array}{l}
                                                            \stChoice{\lbl{verified}}{} \stSeq \stEnd    \\
                                                            \stChoice{\lbl{notVerified}}{} \stSeq \stEnd \\
                                                        \end{array}
                                                    }    }  }}
                                    \end{array}
                        } }}                                                                                                                                                                                                                                                                                          \\
                        \gtProj{\gtG}{\roleRandom} & = & \stRec{\stRecVar}{\stIntSum{\roleBob}{}{\stChoice{}{\tyBit} \stSeq \stExtSumSing{\roleBob}{}{
                                    \begin{array}{l}
                                        \stChoice{\lbl{loop}}{} \stSeq \stRecVar \\
                                        \stChoice{\lbl{stop}}{} \stSeq \stEnd    \\
                                    \end{array}
                                }  }}
                    \end{array}
                $}}
        \caption{Projections for Quantum Bit-Commitment}\label{fig:qbcProjections}
    \end{figure*}
    For simplicity, we show the typing of $\roleRandom$ only, in~\cref{fig:qbcRandom}.\finex
    \begin{figure*}[t]
        \centerline{
            \scalebox{\scalef}{
                \inference[]{
                    \inference[]{
                        \inference[]{
                            \inference[]{
                                \inference[]{
                                    \inference[]
                                    {
                                        \stJudge{\mpX: (\emptyset, \gtProj{\gtG}{\roleRandom})}{\stEnvMap{r}{\tyBit}}{\mpCall{\mpX}{}}{\gtProj{\gtG}{\roleRandom}}
                                        \qquad
                                        \stJudge{\mpX: (\emptyset, \gtProj{\gtG}{\roleRandom})}{\stEnvMap{r}{\tyBit}}{\mpNil}{\stEnd}
                                    }
                                    {
                                        \stJudge{\mpX: (\emptyset, \gtProj{\gtG}{\roleRandom})}{\stEnvMap{r}{\tyBit}}{\mpBranchRaw{}{\roleBob}{
                                                \left\{
                                                \begin{array}{l}
                                                    \lbl{loop} \mpSeq \mpCall{\mpX}{} \\
                                                    \lbl{stop} \mpSeq \mpNil
                                                \end{array}
                                                \right\}
                                            }
                                        }
                                        {
                                            \stExtSum{\roleBob}{}{
                                                \begin{array}{l}
                                                    \stChoice{\lbl{loop}}{} \stSeq \gtProj{\gtG}{\roleRandom} \\
                                                    \stChoice{\lbl{stop}}{} \stSeq \stEnd                     \\
                                                \end{array}
                                            }
                                        }
                                    }
                                }
                                {
                                    \stJudge{\mpX: (\emptyset, \gtProj{\gtG}{\roleRandom})}{\stEnvMap{r}{\tyBit}}{\mpSel{}{\roleBob}{}{r}{ \mpBranchRaw{}{\roleBob}{
                                                \left\{
                                                \begin{array}{l}
                                                    \lbl{loop} \mpSeq \mpCall{\mpX}{} \\
                                                    \lbl{stop} \mpSeq \mpNil
                                                \end{array}
                                                \right\}
                                            }}
                                    }{\gtProj{\gtG}{\roleRandom}}
                                }
                            }
                            {
                                \qstJudge{\mpX: (\emptyset, \gtProj{\gtG}{\roleRandom})}{\emptyset}{\stEnvMap{x_q}{\tyQBit}}{\measure{r}{x_q} \mpSeq \mpSel{}{\roleBob}{}{r}{ \mpBranchRaw{}{\roleBob}{
                                            \left\{
                                            \begin{array}{l}
                                                \lbl{loop} \mpSeq \mpCall{\mpX}{} \\
                                                \lbl{stop} \mpSeq \mpNil
                                            \end{array}
                                            \right\}
                                        }}
                                }{\gtProj{\gtG}{\roleRandom}}
                            }
                        }
                        {
                            \qstJudge{\mpX: (\emptyset, \gtProj{\gtG}{\roleRandom})}{\emptyset}{\stEnvMap{x_q}{\tyQBit}}{\expUnitary[H]{x_q} \mpSeq \measure{r}{x_q} \mpSeq \mpSel{}{\roleBob}{}{r}{ \mpBranchRaw{}{\roleBob}{
                                        \left\{
                                        \begin{array}{l}
                                            \lbl{loop} \mpSeq \mpCall{\mpX}{} \\
                                            \lbl{stop} \mpSeq \mpNil
                                        \end{array}
                                        \right\}
                                    }}
                            }{\gtProj{\gtG}{\roleRandom}}
                        }
                    }{
                        \stJudge{\mpX: (\emptyset, \gtProj{\gtG}{\roleRandom})}{\emptyset}{\expNewQBit[x_q] \mpSeq \expUnitary[H]{x_q} \mpSeq \measure{r}{x_q} \mpSeq \mpSel{}{\roleBob}{}{r}{ \mpBranchRaw{}{\roleBob}{
                                    \left\{
                                    \begin{array}{l}
                                        \lbl{loop} \mpSeq \mpCall{\mpX}{} \\
                                        \lbl{stop} \mpSeq \mpNil
                                    \end{array}
                                    \right\}
                                }}
                        }{\gtProj{\gtG}{\roleRandom}}
                    } \quad \Phi
                }
                {
                    \stJudge{\emptyset}{\emptyset}
                    {
                        \mpFmt{\mathbf{def}\;X} =
                        \expNewQBit[x_q] \mpSeq \expUnitary[H]{x_q} \mpSeq \measure{r}{x_q} \mpSeq \mpSel{}{\roleBob}{}{r}{ \mpBranchRaw{}{\roleBob}{
                                \left\{
                                \begin{array}{l}
                                    \lbl{loop} \mpSeq \mpCall{\mpX}{} \\
                                    \lbl{stop} \mpSeq \mpNil
                                \end{array}
                                \right\}
                            }
                            ~\mathbf{in}~
                            \mpCall{\mpX}{}}
                    }
                    {\gtProj{\gtG}{\roleRandom}}
                }
            }}
        where $\Phi= \stJudge{\mpX: (\emptyset, \gtProj{\gtG}{\roleRandom})}{\emptyset}{\mpCall{\mpX}{}}{\gtProj{\gtG}{\roleRandom}}  $.
        \caption{Typing of $\roleRandom$ in Quantum Bit-Commitment}\label{fig:qbcRandom}
    \end{figure*}
\end{example}
\ifmain
A further case study, discussing Gottesman and Chuang's Quantum Digital
Signature~\cite{gottesman2001ds}, is available in~\cref{apx:ex:key}.
\fi
\begin{toappendix}
    \label{apx:ex:key}
    \begin{example}[Key distribution]\label{ex:key}
        We present here the protocol of key generation and distribution from
        Gottesman and Chuang Quantum Digital Signature, presented
        in~\cite{gottesman2001ds}. The description is taken from~\cite{qpzoo}\footnote{We thank M.~Delavar and S.~Singh for helping us in understanding the protocol.}. For simplicity we consider distribution of a single key, hence in the description in~\cite{qpzoo} we fix $M=1$.
        The protocol is described by the global type in~\cref{fig:kdGlobal}.

        \begin{figure*}[t]
            \centerline{
                \scalebox{\scalef}{$\displaystyle
                        \begin{array}{l}
                            \gtCommSingle{\roleSeller}{\roleBuyer}{}{\tyQBit^4}{} \gtSeq
                            \gtCommSingle{\roleSeller}{\roleVerifier}{}{\tyQBit^4}{} \gtSeq \\
                            \qquad\gtCommBegin{\roleBuyer}{\roleVerifier}
                            \left\{
                            \begin{array}{l}
                                \lbl{abort} \gtSeq \gtComm{\roleVerifier}{\roleEnv}{}{\lbl{abort}}{}{\gtEnd}, \\
                                \lbl{ok}(\tyQBit^2) \gtSeq \gtCommBegin{\roleVerifier}{\roleBuyer}
                                \left\{
                                \begin{array}{l}
                                    \lbl{abort} \gtSeq \gtComm{\roleVerifier}{\roleEnv}{}{\lbl{abort}}{}{\gtEnd}, \\
                                    \lbl{ok}(\tyQBit^2) \gtSeq \gtCommBegin{\roleBuyer}{\roleVerifier} 
                                    \left\{
                                    \begin{array}{l}
                                        \lbl{abort} \gtSeq \gtComm{\roleVerifier}{\roleEnv}{}{\lbl{abort}}{}{\gtEnd}, \\
                                        \lbl{ok} \gtSeq \gtCommBegin{\roleVerifier}{\roleEnv} \left\{
                                        \begin{array}{l}
                                            \lbl{abort} \gtSeq \gtEnd, \\
                                            \lbl{ok} \gtSeq \gtEnd
                                        \end{array}
                                        \right\}
                                    \end{array}
                                    \right\}
                                \end{array}
                                \right\}
                            \end{array}
                            \right\}
                        \end{array}
                    $}}
            \caption{Global Type of Key Distribution}\label{fig:kdGlobal}.
        \end{figure*}
        The idea of the protocol is as follows. For each message bit (say 0
        and 1), $\roleSeller$ selects a single (since we assume $M=1$) classical bit
        string randomly. This is chosen to be her private key for that
        message bit. Using this private key as input, $\roleSeller$ generates four copies of each public key
        using a quantum one-way function $\qow{\cdot}$ which returns a $\tyQBit$.
        Then $\roleSeller$ sends two copies of each public key to $\roleBuyer$ and two to $\roleVerifier$.

        In order to check that the copies of the keys are actually identical,
        the protocol relies on the quantum swap test
        from~\cite{buhrman2001quantum}, which non-destructively compares two
        $\tyQBit$s, and returns $1$ if they are equal, and either $0$ or $1$
        if they differ, where the probability of $0$ depends on the internal
        product of the two $\tyQBit$s.

        Both $\roleBuyer$ and $\roleVerifier$ test their pairs of keys, and
        abort if the test fails. Otherwise, they swap one of their keys, and
        compare their own key with the received one. If all the tests
        succeed, key distribution has been successful.

        The quantum swap test can be defined as:
        \[
            \begin{array}{rcl}
                \QS{q,q'} & = & \expNewQBit[x_a] \mpSeq \expUnitary[H]{x_a} \mpSeq \expUnitary[CSWAP]{x_a,q,q'} \mpSeq 
                \expUnitary[H]{x_a} \mpSeq \measure{x}{x_a}
            \end{array}
        \]
        where $\expUnitaryOp[H]$ is a Hadamard gate (the first occurrence allows here to generate the state $(\ket{0} + \ket{1})/\sqrt{2}$) and $\expUnitaryOp[CSWAP]$ is a controlled-swap (Fredkin gate), controlled by first qubit. We assume that the quantum swap test above returns $x$, and we use it in expressions such as $y=\QS{q_1,q_2}$.

        We can now define the system:
        \[\sysM = \named{\roleSeller}{P_S} \pp \named{\roleBuyer}{P_B} \pp \named{\roleVerifier}{P_V}\]
        where the definition of processes is in~\cref{fig:kdProcesses}.
        \begin{figure*}
            \centerline{
                \scalebox{\scalef}{$\displaystyle
                        \begin{array}{rcl}
                            P_S & = & \mpSel{}{\roleBuyer}{}{\qow{k_0},\qow{k_0},\qow{k_1},\qow{k_1}}{ \mpSel{}{\roleVerifier}{}{\qow{k_0},\qow{k_0},\qow{k_1},\qow{k_1}}{\mpNil}} \\
                            P_B & = & \mpBranchSingle{}{\roleSeller}{}{qk_0^1,qk_0^2,qk_1^1,qk_1^2}{} \mpSeq r_0=\QS{qk_0^1,qk_0^2} \mpSeq r_1=\QS{qk_1^1,qk_1^2} \mpSeq           \\&&\qquad\mpIfMultZ{\neg r_0 \lor \neg r_1}{\mpSel{}{\roleVerifier}{\lbl{abort}}{}{\mpNil[qk_0^1,qk_0^2,qk_1^1,qk_1^2]}}{
                                \mpSel{}{\roleVerifier}{\lbl{ok}}{qk_0^1,qk_1^1}{} \mpSeq \mpBranchRaw{}{\roleVerifier}{\left\{
                                    \begin{array}{l}
                                        \lbl{abort} \mpSeq \mpNil[qk_0^2,qk_1^2],                         \\
                                        \lbl{ok}(qk_0^v,qk_1^v) \mpSeq
                                        r_{0f}=\QS{qk_0^v,qk_0^2} \mpSeq r_{1f}=\QS{qk_1^v,qk_1^2} \mpSeq \\ \qquad \mpIfMultZ{\neg r_{0f} \lor \neg r_{1f}}{\mpSel{}{\roleVerifier}{\lbl{abort}}{}{\mpNil[qk_0^2,qk_1^2,qk_0^v,qk_1^v]}}{\mpSel{}{\roleVerifier}{\lbl{ok}}{}{\mpNil[qk_0^2,qk_1^2q,k_0^v,qk_1^v]}}
                                    \end{array}
                                    \right\}}
                            }                                                                                                                                                      \\
                            P_V & = & \mpBranchSingle{}{\roleSeller}{}{qk_0^3,qk_0^4,qk_1^3,qk_1^4}{} \mpSeq r_0=\QS{qk_0^3,qk_0^4} \mpSeq r_1=\QS{qk_1^3,qk_1^4}.                 \\&&\qquad\mpBranchRaw{}{\roleBuyer}{
                                \left\{
                                \begin{array}{l}
                                    \lbl{abort} \mpSeq \mpSel{}{\roleEnv}{\lbl{abort}}{}{\mpNil[qk_0^3,qk_0^4,qk_1^3,qk_1^4]} \\
                                    \lbl{ok}(qk_0^b,qk_1^b) \mpSeq
                                    \mpIfMultZ{\neg r_0 \lor \neg r_1}
                                    {\mpSeq \mpSel{}{\roleBuyer}{\lbl{abort}}{}{\mpSeq \mpSel{}{\roleEnv}{\lbl{ok}}{}{\mpNil[qk_0^3,qk_0^4,qk_1^3,qk_1^4,qk_0^b,qk_1^b]}}}
                                    {
                                        \mpSel{}{\roleBuyer}{\lbl{ok}}{qk_0^3,qk_1^3}{}{} \mpSeq
                                        \mpBranchRaw{}{\roleBuyer}{
                                            \left\{
                                            \begin{array}{l}
                                                \lbl{abort} \mpSeq \mpSel{}{\roleEnv}{\lbl{abort}}{}{\mpNil[qk_0^4,qk_1^4,qk_0^b,qk_1^b]}, \\
                                                \mpSeq r_{0g}=\QS{qk_0^b,qk_0^4} \mpSeq r_{1g}=\QS{qk_1^b,qk_1^4} \mpSeq                   \\ \qquad
                                                \mpIfMultZ{\neg r_{0g} \lor \neg r_{1g}}
                                                {\mpSeq \mpSel{}{\roleEnv}{\lbl{abort}}{}{\mpNil[qk_0^4,qk_1^4,qk_0^b,qk_1^b]}}
                                                {\mpSeq \mpSel{}{\roleEnv}{\lbl{ok}}{}{\mpNil[qk_0^4,qk_1^4,qk_0^b,qk_1^b]}}
                                            \end{array}
                                            \right\}
                                        }

                                    }
                                \end{array}
                                \right\}
                            }                                                                                                                                                      \\
                            P_E & = & \mpBranchSingle{}{\roleVerifier}{
                                \left\{
                                \begin{array}{l}
                                    \lbl{abort} \mpSeq \mpNil, \\
                                    \lbl{ok} \mpSeq \mpNil
                                \end{array}
                                \right\}
                            }{}{}
                        \end{array}$
                }}
            where $k_0$ and $k_1$ are vectors of $m$ bits.
            \caption{Processes of Key Distribution}\label{fig:kdProcesses}
        \end{figure*}
        Function $\qowOp$ performs quantum one-way map on the argument, which takes $n$ bits and computes a qubit. We need an extension of the syntax to allow for functions which take bits and return qubits. Array notation is for element selection.

        The projections are in~\cref{fig:kdProjections}.

        \begin{figure*}[t]
            \centerline{
                \scalebox{\scalef}{$\displaystyle
                        \begin{array}{rcl}
                            \gtProj{\gtG}{\roleSeller}   & = & \stIntSumSing{\roleBuyer}{}{\stChoice{}{\tyQBit^4} \stSeq \stIntSumSing{\roleVerifier}{}{\stChoice{}{\tyQBit^4} \stSeq \stEnd}} \\
                            \gtProj{\gtG}{\roleBuyer}    & = & \stExtSumSing{\roleSeller}{}{\stChoice{}{\tyQBit^4} \stSeq \stIntSum{\roleVerifier}{}{
                                    \begin{array}{l}
                                        \stChoice{\lbl{abort}}{} \stSeq \stEnd, \\
                                        \stChoice{\lbl{ok}}{\tyQBit^2} \stSeq \stExtSum{\roleVerifier}{}{
                                            \begin{array}{l}
                                                \stChoice{\lbl{abort}}{} \stSeq \stEnd, \\
                                                \stChoice{\lbl{ok}}{\tyQBit^2} \stSeq  \stIntSum{\roleVerifier}{}{
                                                    \begin{array}{l}
                                                        \stChoice{\lbl{abort}}{} \stSeq \stEnd, \\
                                                        \stChoice{\lbl{ok}}{\tyQBit^2} \stSeq \stEnd
                                                    \end{array}}
                                            \end{array}
                                        }
                                    \end{array}
                            }}                                                                                                                                                                 \\
                            \gtProj{\gtG}{\roleVerifier} & = & \stExtSumSing{\roleSeller}{}{\stChoice{}{\tyQBit^4} \stSeq \stExtSum{\roleBuyer}{}{
                                    \begin{array}{l}
                                        \stChoice{\lbl{abort}}{} \stSeq \stEnd, \\
                                        \stChoice{\lbl{ok}}{\tyQBit^2} \stSeq \stIntSum{\roleBuyer}{}{
                                            \begin{array}{l}
                                                \stChoice{\lbl{abort}}{} \stSeq \stEnd, \\
                                                \stChoice{\lbl{ok}}{\tyQBit^2} \stSeq  \stExtSum{\roleBuyer}{}{
                                                    \begin{array}{l}
                                                        \stChoice{\lbl{abort}}{} \stSeq \stEnd, \\
                                                        \stChoice{\lbl{ok}}{\tyQBit^2} \stSeq \stIntSum{\roleEnv}{}{
                                                            \begin{array}{l}
                                                                \stChoice{\lbl{abort}}{} \stSeq \stEnd, \\
                                                                \stChoice{\lbl{ok}}{} \stSeq \stEnd
                                                            \end{array}
                                                        }
                                                    \end{array}}
                                            \end{array}
                                        }
                                    \end{array}
                                }}
                        \end{array}
                    $}}
            \caption{Projections of Key Distribution}\label{fig:kdProjections}
        \end{figure*}

        From the informal protocol description above it may seem that
        $\roleBuyer$ and $\roleVerifier$ can exchange the results
        of their tests in any order. This is however considered harmful
        since if the two participants do not agree on the order, then they may
        decide to both start receiving, resulting in a deadlock. Such a
        protocol would not be typable using MPSTs (it violates type
        safety (~\cref{thm:safety})), indeed in our formalization we had to
        force an ordering. A protocol forcing the opposite order would be
        typable as well. Recent approaches such as~\cite{CasalMV22} show that
        the protocol with no pre-defined order can be managed for synchronous
        communications (like in our setting) using mixed choice, which are
        however not supported by the current typing rules. This is left for
        future work.\finex
    \end{example}

\end{toappendix}

\section{Discussion}
\label{sec:concl}
We have extended MPSTs so to provide a clean formalism to describe and reason about quantum protocols.
There are other languages for describing quantum protocols,
such as imperative languages like LanQ~\cite{LanQ} and QMCLANG~\cite{Papanikolaou09,DavidsonGMNP12},
and process calculi such as CQP~\cite{GayN05} and $\textrm{CCS}^\textrm{q}$~\cite{gay18}.
\emph{All} of them rely on named channels for communication, and channels are typed
with the type of their payload (and linear in some approaches),
hence multiple channels are needed between the same participants.
We believe our approach of specifying in send (resp.~receive) operations the target (resp.~sender) of the
message makes for a clearer description. All the approaches above are also based on classical types, not behavioral
types, hence do not provide a global type as abstract description of the protocol.

\paragraph*{Quantum Session Types}
We purposefully based our approach on a minimal MPST system
(taking inspiration from~\cite{GlabbeekHH21,gentleIntro}), to
highlight the interplay between MPSTs and quantum computing,
and to understand which advanced features of session types are useful
to type protocols in the area.
While we have found no need for multiple sessions or delegation,
\cref{ex:qbc} could be modeled in a way closer to the specification
in~\cite[Fig.~6]{GayN05} by relying on value-dependent session types
(see~\cite{toninho2011dependent}). Indeed,
one would like to specify that the number of iterations depends on an
integer exchanged beforehand.

The same example has also shown that sometimes requiring all the
processes to complete their program is too restrictive. In~\cref{ex:qbc},
$\roleRandom$ could be more easily modeled as an
infinite loop generating random numbers, and the fact that it could
generate further random numbers after the end of the protocol is not
an issue from the practical point of view. However, to type such a
formalisation of the protocol one would need, beyond changing the
typing rules, to weaken type safety (\cref{thm:safety}). One could
specify that some participants do not need to terminate their
behaviour. Such a kind of approach has been studied
in~\cite{BD23}. Note that selective participation (as discussed
in~\cite{HuY17,HFDG21,GheriLSTY22}), allowing a participant to join
only in some branches of a protocol, would not be helpful here, since
indeed $\roleRandom$ is needed in all the executions.

In the future we want to tackle the two limitations discussed above
(as well as mixed choice, cf.~\cref{ex:key}).  Also, it would be
interesting to consider refinement types. While not needed from the
modeling point of view, they would allow to prove stronger properties,
e.g., about possible results of measure operations. This would need
however to combine refinement types with probabilities (the result of
a measurement operation is probabilistic) and to capture the
phenomenon of entanglement, which relates the behavior of different
qubits.

\paragraph*{Quantum Processes}
The particular abstraction of quantum processes we have chosen is based on CQP~\cite{GayN05},
where processes can perform quantum operations on quantum data,
and exchange messages on \emph{channels} with other processes --
these channels allow for classical communication with \emph{quantum} references,
which are pointers to qubits in a \emph{global} quantum register.
Ergo, these are not \emph{quantum channels} in the sense of quantum information theory
(see~\cite[Ch.12]{nielsen2010quantum}),
because processes do not have \emph{local} quantum state --
however, having a global quantum state with local quantum references appears to be expressive enough --
whether this is a limitation or a feature of our approach remains to be investigated.

There exist rigorous mathematical frameworks for quantum networks~\cite{giulio2009theoretical}
and quantum protocols~\cite{abramskyCategoricalSemanticsQuantum2004} (to name a few).
We do not know any formal calculi for quantum processes and quantum channels based on these frameworks,
and it is not yet clear how our approach of extending MPSTs to quantum computing could be related to them.
Session types are a completely syntactic approach,
but there are recent attempts to give them denotational semantics
in terms of event structures~\cite{CASTELLANI2023100844},
or using category
theory~\cite{atkeyObservedCommunicationSemantics2017,kavanaghDomainSemanticsHigherOrder2020,choudhuryClassicalProcessesModern2023}.
In fact, the categorical structure of session types is the same as the categorical structure of quantum processes
(compact closure or star-autonomy) -- they both require a notion of strong duality.
This suggests, on the one hand, a way to use session types to build a type theory for quantum processes,
and on the other hand, a way to verify correctness of typed quantum processes using denotational semantics,
which we hope to explore in future work.

\renewcommand{\appendixsectionformat}[2]{
  {Supplementary material for Section~#1 (#2)}
}

\ifmain\clearpage\fi
\printbibliography

@string{springer = "Springer"}

@string{elsevier= "Elsevier"}

@string{lncs="LNCS"}

@article{secretSharing,
  title     = {Quantum secret sharing},
  author    = {Hillery, Mark and Bu{\v{z}}ek, Vladim{\'\i}r and Berthiaume, Andr{\'e}},
  journal   = {Physical Review A},
  volume    = {59},
  number    = {3},
  pages     = {1829},
  year      = {1999},
  publisher = {APS}
}

@article{bennett2014quantum,
  title     = {Quantum cryptography: Public key distribution and coin tossing},
  author    = {Bennett, Charles H and Brassard, Gilles},
  journal   = {Theoretical computer science},
  volume    = {560},
  pages     = {7--11},
  year      = {2014},
  publisher = {Elsevier}
}

@article{noclone,
  author    = {William K. Wootters and Wojciech H. Zurek},
  title     = {A single quantum cannot be cloned},
  journal   = {Nature},
  year      = {1982},
  optkey    = {},
  volume    = {299},
  optnumber = {},
  pages     = {802–803},
  optmonth  = {},
  optnote   = {},
  optannote = {},
  doi       = {10.1038/299802a0}
}

@phdthesis{Papanikolaou09,
  author    = {Nikolaos Papanikolaou},
  title     = {Model checking quantum protocols},
  school    = {University of Warwick, Coventry, {UK}},
  year      = {2009},
  url       = {http://wrap.warwick.ac.uk/2236/},
  timestamp = {Tue, 05 Apr 2022 10:59:13 +0200},
  biburl    = {https://dblp.org/rec/phd/ethos/Papanikolaou09.bib},
  bibsource = {dblp computer science bibliography, https://dblp.org}
}

@article{SelingerV06,
  author  = {Peter Selinger and
             Beno{\^i}t Valiron},
  title   = {A lambda calculus for quantum computation with classical control},
  journal = {MSCS},
  volume  = {16},
  number  = {3},
  pages   = {527--552},
  year    = {2006},
  url     = {https://doi.org/10.1017/S0960129506005238},
  doi     = {10.1017/S0960129506005238}
}

@article{DavidsonGMNP12,
  author    = {Timothy A. S. Davidson and
               Simon J. Gay and
               Hynek Mlnarik and
               Rajagopal Nagarajan and
               Nick Papanikolaou},
  title     = {Model Checking for Communicating Quantum Processes},
  journal   = {Int. J. Unconv. Comput.},
  volume    = {8},
  number    = {1},
  pages     = {73--98},
  year      = {2012},
  timestamp = {Thu, 16 Apr 2020 13:55:07 +0200},
  biburl    = {https://dblp.org/rec/journals/ijuc/DavidsonGMNP12.bib},
  bibsource = {dblp computer science bibliography, https://dblp.org}
}

@article{qplSurvey,
  title     = {Quantum programming language: A systematic review of research topic and top cited languages},
  author    = {Garhwal, Sunita and Ghorani, Maryam and Ahmad, Amir},
  journal   = {Archives of Computational Methods in Engineering},
  volume    = {28},
  pages     = {289--310},
  year      = {2021},
  publisher = {Springer}
}

@misc{LanQ,
  title     = {Introduction to LanQ--an Imperative Quantum Programming Language},
  author    = {Mlnar{\i}k, Hynek},
  year      = {2006},
  publisher = {Citeseer},
  howpublished = {\url{https://citeseerx.ist.psu.edu/document?repid=rep1&type=pdf&doi=9bca7bbacb0cce8c7fa3ceb55387700164c364bb}}
}

@inproceedings{GayN05,
  author    = {Simon J. Gay and
               Rajagopal Nagarajan},
  opteditor = {Jens Palsberg and
               Mart{\'{\i}}n Abadi},
  title     = {Communicating quantum processes},
  booktitle = {32nd {ACM} {SIGPLAN-SIGACT} Symposium on Principles
               of Programming Languages, {POPL} 2005},
  pages     = {145--157},
  publisher = {{ACM}},
  year      = {2005},
  url       = {https://doi.org/10.1145/1040305.1040318},
  doi       = {10.1145/1040305.1040318}
}

@article{Huttel+16,
  author  = {Hans H{\"{u}}ttel and
             Ivan Lanese and
             Vasco T. Vasconcelos and
             Lu{\'{\i}}s Caires and
             Marco Carbone and
             Pierre{-}Malo Deni{\'{e}}lou and
             Dimitris Mostrous and
             Luca Padovani and
             Ant{\'{o}}nio Ravara and
             Emilio Tuosto and
             Hugo Torres Vieira and
             Gianluigi Zavattaro},
  title   = {Foundations of Session Types and Behavioural Contracts},
  journal = {{ACM} Comput. Surv.},
  volume  = {49},
  number  = {1},
  pages   = {3:1--3:36},
  year    = {2016},
  doi     = {10.1145/2873052}
}

@article{gay18,
  author  = {Ebrahim Ardeshir{-}Larijani and
             Simon J. Gay and
             Rajagopal Nagarajan},
  title   = {Automated Equivalence Checking of Concurrent Quantum Systems},
  journal = {{ACM} Trans. Comput. Log.},
  volume  = {19},
  number  = {4},
  pages   = {28:1--28:32},
  year    = {2018},
  url     = {https://doi.org/10.1145/3231597},
  doi     = {10.1145/3231597}
}

@inproceedings{gentleIntro,
  author    = {Nobuko Yoshida and
               Lorenzo Gheri},
  opteditor = {Dang Van Hung and
               Meenakshi D'Souza},
  title     = {A Very Gentle Introduction to Multiparty Session Types},
  booktitle = {{ICDCIT}},
  series    = {LNCS},
  volume    = {11969},
  pages     = {73--93},
  publisher = {Springer},
  year      = {2020},
  url       = {https://doi.org/10.1007/978-3-030-36987-3\_5},
  doi       = {10.1007/978-3-030-36987-3\_5}
}

@misc{gottesman2001ds,
  title         = {Quantum Digital Signatures},
  author        = {Daniel Gottesman and Isaac Chuang},
  year          = {2001},
  eprint        = {quant-ph/0105032},
  archiveprefix = {arXiv},
  primaryclass  = {quant-ph}
}

@misc{qpzoo,
  key          = {The Quantum Protocol Zoo},
  optauthor    = {},
  title        = {The Quantum Protocol Zoo},
  howpublished = {\url{https://wiki.veriqloud.fr/}},
  optmonth     = {},
  year         = {2024},
  optnote      = {},
  optannote    = {}
}

@article{buhrman2001quantum,
  title     = {Quantum fingerprinting},
  author    = {Buhrman, Harry and Cleve, Richard and Watrous, John and De Wolf, Ronald},
  journal   = {Physical review letters},
  volume    = {87},
  number    = {16},
  pages     = {167902},
  year      = {2001},
  publisher = {APS}
}

@article{ScalasY19,
  author  = {Alceste Scalas and
             Nobuko Yoshida},
  title   = {Less is more: multiparty session types revisited},
  journal = {Proc. {ACM} Program. Lang.},
  volume  = {3},
  number  = {{POPL}},
  pages   = {30:1--30:29},
  year    = {2019},
  url     = {https://doi.org/10.1145/3290343},
  doi     = {10.1145/3290343}
}

@article{HondaYC16,
  author  = {Kohei Honda and
             Nobuko Yoshida and
             Marco Carbone},
  title   = {Multiparty Asynchronous Session Types},
  journal = {J. {ACM}},
  volume  = {63},
  number  = {1},
  pages   = {9:1--9:67},
  year    = {2016},
  url     = {https://doi.org/10.1145/2827695},
  doi     = {10.1145/2827695}
}

@inproceedings{GlabbeekHH21,
  author    = {Rob van Glabbeek and
               Peter H{\"{o}}fner and
               Ross Horne},
  title     = {Assuming Just Enough Fairness to make Session Types Complete for Lock-freedom},
  booktitle = {{LICS}},
  pages     = {1--13},
  publisher = {{IEEE}},
  year      = {2021},
  url       = {https://doi.org/10.1109/LICS52264.2021.9470531},
  doi       = {10.1109/LICS52264.2021.9470531}
}

@inproceedings{BD23,
  author    = {Franco Barbanera and
               Mariangiola Dezani{-}Ciancaglini},
  opteditor = {Cl{\'{e}}ment Aubert and
               Cinzia {Di Giusto} and
               Simon Fowler and
               Larisa Safina},
  title     = {Partially Typed Multiparty Sessions},
  booktitle = {16th Interaction and Concurrency Experience, {ICE} 2023},
  series    = {{EPTCS}},
  volume    = {383},
  pages     = {15--34},
  year      = {2023},
  url       = {https://doi.org/10.4204/EPTCS.383.2},
  doi       = {10.4204/EPTCS.383.2}
}

@inproceedings{GheriLSTY22,
  author    = {Lorenzo Gheri and
               Ivan Lanese and
               Neil Sayers and
               Emilio Tuosto and
               Nobuko Yoshida},
  opteditor = {Karim Ali and
               Jan Vitek},
  title     = {Design-By-Contract for Flexible Multiparty Session Protocols},
  booktitle = {{ECOOP}},
  series    = {LIPIcs},
  volume    = {222},
  pages     = {8:1--8:28},
  year      = {2022},
  url       = {https://doi.org/10.4230/LIPIcs.ECOOP.2022.8},
  doi       = {10.4230/LIPICS.ECOOP.2022.8}
}

@inproceedings{HuY17,
  author    = {Raymond Hu and
               Nobuko Yoshida},
  opteditor = {Marieke Huisman and
               Julia Rubin},
  title     = {Explicit Connection Actions in Multiparty Session Types},
  booktitle = {{FASE}},
  series    = {LNCS},
  volume    = {10202},
  pages     = {116--133},
  publisher = {Springer},
  year      = {2017},
  url       = {https://doi.org/10.1007/978-3-662-54494-5\_7},
  doi       = {10.1007/978-3-662-54494-5\_7}
}

@inproceedings{HFDG21,
  author    = {Paul Harvey and
               Simon Fowler and
               Ornela Dardha and
               Simon J. Gay},
  opteditor = {Anders M{\o}ller and
               Manu Sridharan},
  title     = {Multiparty Session Types for Safe Runtime Adaptation in an Actor Language},
  booktitle = {{ECOOP}},
  series    = {LIPIcs},
  volume    = {194},
  pages     = {10:1--10:30},
  publisher = {Schloss Dagstuhl - Leibniz-Zentrum f{\"{u}}r Informatik},
  year      = {2021},
  url       = {https://doi.org/10.4230/LIPIcs.ECOOP.2021.10},
  doi       = {10.4230/LIPICS.ECOOP.2021.10}
}

@article{CasalMV22,
  author  = {Filipe Casal and
             Andreia Mordido and
             Vasco T. Vasconcelos},
  title   = {Mixed sessions},
  journal = {Theor. Comput. Sci.},
  volume  = {897},
  pages   = {23--48},
  year    = {2022},
  url     = {https://doi.org/10.1016/j.tcs.2021.08.005},
  doi     = {10.1016/J.TCS.2021.08.005}
}

@article{bennett1993teleporting,
  title     = {Teleporting an unknown quantum state via dual classical and Einstein-Podolsky-Rosen channels},
  author    = {Bennett, Charles H and Brassard, Gilles and Cr{\'e}peau, Claude and Jozsa, Richard and Peres, Asher and Wootters, William K},
  journal   = {Physical review letters},
  volume    = {70},
  number    = {13},
  pages     = {1895},
  year      = {1993},
  publisher = {APS}
}

@article{gadducciQBisim,
  author  = {Lorenzo Ceragioli and
             Fabio Gadducci and
             Giuseppe Lomurno and
             Gabriele Tedeschi},
  title   = {Quantum Bisimilarity via Barbs and Contexts: Curbing the Power of
             Non-deterministic Observers},
  journal = {Proc. {ACM} Program. Lang.},
  volume  = {8},
  number  = {{POPL}},
  pages   = {1269--1297},
  year    = {2024},
  url     = {https://doi.org/10.1145/3632885},
  doi     = {10.1145/3632885}
}

@inproceedings{dal2015applicative,
  author       = {Ugo {Dal Lago} and
                  Alessandro Rioli},
  editor       = {Mehdi Dastani and
                  Marjan Sirjani},
  title        = {Applicative Bisimulation and Quantum {\(\lambda\)}-Calculi},
  booktitle    = {Fundamentals of Software Engineering - 6th International Conference,
                  {FSEN} 2015 Tehran, Iran, April 22-24, 2015, Revised Selected Papers},
  series       = {Lecture Notes in Computer Science},
  volume       = {9392},
  pages        = {54--68},
  publisher    = {Springer},
  year         = {2015},
  url          = {https://doi.org/10.1007/978-3-319-24644-4\_4},
  doi          = {10.1007/978-3-319-24644-4\_4},
}

@article{bernardo2013uniform,
  title     = {A uniform framework for modeling nondeterministic, probabilistic, stochastic, or mixed processes and their behavioral equivalences},
  author    = {Bernardo, Marco and De Nicola, Rocco and Loreti, Michele},
  journal   = {Information and Computation},
  volume    = {225},
  pages     = {29--82},
  year      = {2013},
  publisher = {Elsevier}
}

@book{barber1996dual,
  title     = {Dual intuitionistic linear logic},
  author    = {Barber, Andrew and Plotkin, Gordon},
  year      = {1996},
  publisher = {University of Edinburgh, Laboratory for Foundations of Computer Science}
}

@book{nielsen2010quantum,
  title     = {Quantum Computation and Quantum Information: 10th Anniversary Edition},
  author    = {Nielsen, M.A. and Chuang, I.L.},
  isbn      = {9781139495486},
  url       = {https://books.google.it/books?id=-s4DEy7o-a0C},
  year      = {2010},
  publisher = {Cambridge University Press}
}

@article{CASTELLANI2023100844,
  title    = {Event structure semantics for multiparty sessions},
  journal  = {Journal of Logical and Algebraic Methods in Programming},
  volume   = {131},
  pages    = {100844},
  year     = {2023},
  issn     = {2352-2208},
  doi      = {https://doi.org/10.1016/j.jlamp.2022.100844},
  url      = {https://www.sciencedirect.com/science/article/pii/S2352220822000979},
  author   = {Ilaria Castellani and Mariangiola Dezani-Ciancaglini and Paola Giannini},
  keywords = {Communication-centric systems, Communication-based programming, Process calculi, Event structures, Multiparty session types},
  abstract = {We propose an interpretation of multiparty sessions as Flow Event Structures, which allows concurrency within sessions to be explicitly represented. We show that this interpretation is equivalent, when the multiparty sessions can be described by global types, to an interpretation of such global types as Prime Event Structures.}
}

@inproceedings{toninho2011dependent,
  author    = {Toninho, Bernardo and Caires, Lu\'{\i}s and Pfenning, Frank},
  title     = {Dependent session types via intuitionistic linear type theory},
  year      = {2011},
  isbn      = {9781450307765},
  publisher = {Association for Computing Machinery},
  address   = {New York, NY, USA},
  url       = {https://doi.org/10.1145/2003476.2003499},
  doi       = {10.1145/2003476.2003499},
  abstract  = {We develop an interpretation of linear type theory as dependent session types for a term passing extension of the pi-calculus. The type system allows us to express rich constraints on sessions, such as interface contracts and proof-carrying certification, which go beyond existing session type systems, and are here justified on purely logical grounds. We can further refine our interpretation using proof irrelevance to eliminate communication overhead for proofs between trusted parties. Our technical results include type preservation and global progress, which in our setting naturally imply compliance to all properties declared in interface contracts expressed by dependent types.},
  booktitle = {PPDP},
  pages     = {161–172},
  numpages  = {12},
  keywords  = {type theory, process calculus},
  location  = {Odense, Denmark},
  OPTseries    = {PPDP '11}
}

@inproceedings{abramskyCategoricalSemanticsQuantum2004,
  title      = {A Categorical Semantics of Quantum Protocols},
  booktitle  = {Proceedings of the 19th {{Annual IEEE Symposium}} on {{Logic}} in {{Computer Science}}, 2004.},
  author     = {Abramsky, S. and Coecke, B.},
  date       = {2004},
  pages      = {415--425},
  publisher  = {IEEE},
  location   = {Turku, Finland},
  doi        = {10.1109/LICS.2004.1319636},
  url        = {http://ieeexplore.ieee.org/document/1319636/},
  eventtitle = {Proceedings of the 19th {{Annual IEEE Symposium}} on {{Logic}} in {{Computer Science}}, 2004.},
  isbn       = {978-0-7695-2192-3}
}

@article{giulio2009theoretical,
  title     = {Theoretical framework for quantum networks},
  author    = {Chiribella, Giulio and D'Ariano, Giacomo Mauro and Perinotti, Paolo},
  journal   = {Phys. Rev. A},
  volume    = {80},
  issue     = {2},
  pages     = {022339},
  numpages  = {20},
  year      = {2009},
  month     = {Aug},
  publisher = {American Physical Society},
  doi       = {10.1103/PhysRevA.80.022339},
  url       = {https://link.aps.org/doi/10.1103/PhysRevA.80.022339}
}

@incollection{atkeyObservedCommunicationSemantics2017,
  title     = {Observed {{Communication Semantics}} for {{Classical Processes}}},
  booktitle = {Programming {{Languages}} and {{Systems}}},
  author    = {Atkey, Robert},
  editor    = {Yang, Hongseok},
  date      = {2017},
  volume    = {10201},
  pages     = {56--82},
  publisher = {Springer Berlin Heidelberg},
  location  = {Berlin, Heidelberg},
  doi       = {10.1007/978-3-662-54434-1_3},
  url       = {https://link.springer.com/10.1007/978-3-662-54434-1_3},
  isbn      = {978-3-662-54433-4 978-3-662-54434-1},
  langid    = {english}
}

@article{kavanaghDomainSemanticsHigherOrder2020,
  author       = {Ryan Kavanagh},
  title        = {A Domain Semantics for Higher-Order Recursive Processes},
  journal      = {CoRR},
  volume       = {abs/2002.01960},
  year         = {2020},
  url          = {https://arxiv.org/abs/2002.01960},
  eprinttype    = {arXiv},
  eprint       = {2002.01960},
  timestamp    = {Mon, 10 Feb 2020 15:12:57 +0100},
  biburl       = {https://dblp.org/rec/journals/corr/abs-2002-01960.bib},
  bibsource    = {dblp computer science bibliography, https://dblp.org}
}

@online{choudhuryClassicalProcessesModern2023,
  title   = {Classical {{Processes}} in Modern Dress ({{ST30}} - {{SPLASH}} 2023)},
  author  = {Choudhury, Vikraman},
  url     = {https://2023.splashcon.org/details/st-anniversary-30-papers/9/Classical-Processes-in-modern-dress},
}
\appendix

\end{document}